\documentclass[11pt]{article}

\usepackage[margin=1in]{geometry}
\usepackage[T1]{fontenc}
\usepackage[utf8]{inputenc}
\usepackage{lmodern}
\usepackage{authblk}
\usepackage[numbers,sort&compress]{natbib}
\usepackage{amsmath,amssymb,amsthm,amsfonts}

\usepackage{graphicx,epstopdf}
\usepackage{booktabs}
\usepackage{multirow}
\usepackage{array}
\usepackage{float}
\usepackage[colorlinks=true,linkcolor=blue,citecolor=blue,urlcolor=blue]{hyperref}
\usepackage[numbers]{natbib}

\usepackage{caption}
\usepackage{subcaption}
\usepackage{enumitem}

\newtheorem{lemma}{Lemma}

\theoremstyle{definition}

\theoremstyle{remark}

\title{Resilience Beyond Pairwise Networks}

\author[1]{Amitosh Tiwari}
\author[3,4]{Chittaranjan Hens}
\author[1,2]{Prosenjit Kundu}

\affil[1]{CSys Lab, Complex System Group, Dhirubhai Ambani University, Gandhinagar, Gujarat 382007, India}
\affil[2]{Smart Energy Learning Centre (SELC), Dhirubhai Ambani University, Gandhinagar, Gujarat 382007, India}
\affil[3]{Center for Computational Natural Sciences and Bioinformatics, International Institute of Information Technology Hyderabad, Gachibowli, Hyderabad 500032, Telangana, India}
\affil[4]{Biomedical Research Center, International Institute of Information Technology Hyderabad, Gachibowli, Hyderabad 500032, Telangana, India}

\begin{document}
\maketitle
\begin{abstract}
%Low-dimensional descriptions of network dynamics are constructed for a single interaction order either pairwise or purely higher order interaction. This leaves an important gap for systems in which pairwise links and group interactions act together. 
We derive a one-dimensional reduction for nonlinear dynamics on simplicial complexes containing both pairwise and triangular (higher-order) interactions. The effective state is defined using a mixed weight determined by the pairwise and triangular degrees of each node. The resulting reduced equation retains two structural coefficients, associated separately with the pairwise and higher-order coupling channels. A fluctuation expansion identifies the closure assumptions underlying the reduction and shows how deviations of individual node states from the effective state contribute to the approximation error. We numerically validate the proposed framework on  Gene-regulatory dynamics, the double-well system, and SIS spreading. The states of the reduced model are compared with full-network simulations through coupling-parameter sweeps, steady-state branch calculations, and progressive node-removal experiments on synthetic and real-world networks. The reduced model successfully reproduces the principal transitions and steady-state branches in all three dynamical systems considered. Agreement is strongest for relatively homogeneous networks and deteriorates when structural heterogeneity produces a broader distribution of node states. The closure diagnostics account for this loss of accuracy and indicate when a single effective state is no longer sufficient. The reduction therefore provides a tractable description of resilience in systems with coexisting pairwise and higher-order interactions.
\end{abstract}
\textbf{keywords :}
Higher-order interactions,
Simplicial complexes,
Network resilience,
Effective-state reduction, 
Nonlinear dynamics, 
Critical transitions.
% --- Main Text ---
\section{Introduction}
\label{sec:introduction}

Resilience is the ability of a system to retain its functioning under disturbances, failures, or structural changes
\cite{holling1973resilience,walker2004resilience,scheffer2009critical,
krakovska2024resilience,schoenmakers2021resilience}.
In networked systems, resilience is closely related to the stability of the dynamical process evolving on the underlying interaction structure
\cite{gao2016universal,artime2024robustness,liu2022network}.
Its loss may lead to a sudden transition between qualitatively different states, with consequences for ecological, technological, and socio-economic systems
\cite{holling1973resilience,gao2016universal}.
A key question is therefore how the resilience of a large interacting system arises from the interplay between local nonlinear dynamics and network organisation.
This question arises naturally in network science. Dynamical systems on complex networks have been used to study synchronization
\cite{pikovsky2001synchronization,arenas2008physrep,ji2013prl,
rodrigues2016kuramoto,kundu2017pre,kundu2018epl,kundu2019chaos,
khanra2018pre,khanra2021csf,dutta2025hypergraph,dutta2023perfect,
das2025phaselag,dutta2023phase,dutta2024adaptive,dutta2025double,
ghosh2025universal},
epidemic spreading and diffusion
\cite{pastor2001prl,granell2013prl,pastor2015rmp,wang2017rpp,
mei2017annualreview,colizza2007bmc,higham2021epidemics,
yuan2026noise,luo2026temporal},
and many other collective phenomena
\cite{albert2002rmp,barrat2008dynamical,newman2010networks,
boccaletti2006physrep,dorogovtsev2008rmp}.
Together, these studies show that the organisation of interactions can alter the stability, controllability, and response of a system to perturbations
\cite{dawn2026instability,Meena2023stability,Allesina2012}.

\par
A persistent challenge in the analysis of networked dynamical systems is the dimensionality of the governing equations. A network of $N$ interacting units generally gives rise to $N$ coupled nonlinear equations, making the direct computation of equilibria, stability boundaries, and bifurcation analysis increasingly difficult as the system size grows. Dimension-reduction methods address this difficulty by replacing the original dynamics with one or a few effective variables
\cite{gao2016universal,kundu2022accuracy,jiang2018predicting,
laurence2019prx,Tu2024Separable,Wu2024Collapse}.
By reducing the number of governing variables, these approaches make the analysis of steady states, bifurcation points, and tipping behaviour more tractable
\cite{MaclarenJROS2023,Duan2025Multilayer}.
A prominent example is the effective-state formulation of Gao, Barzel, and Barab\'asi, which showed that the collective behaviour of a broad class of networked systems can be approximated by a one-dimensional equation
\cite{gao2016universal}.
Related spectral, matrix-based, and motif-based methods have also been developed to preserve selected structural features within a reduced description
\cite{laurence2019prx,thibeault2020threefold,burgio2021compphys}.
The usefulness of such reductions, however, depends on their accuracy, which is influenced by the interaction function, the network structure, and the dispersion of individual node states around the effective state
\cite{kundu2022accuracy}.
\vspace{0.3 cm}

\begin{minipage}{\linewidth}
    \centering
    \includegraphics[width=\linewidth]{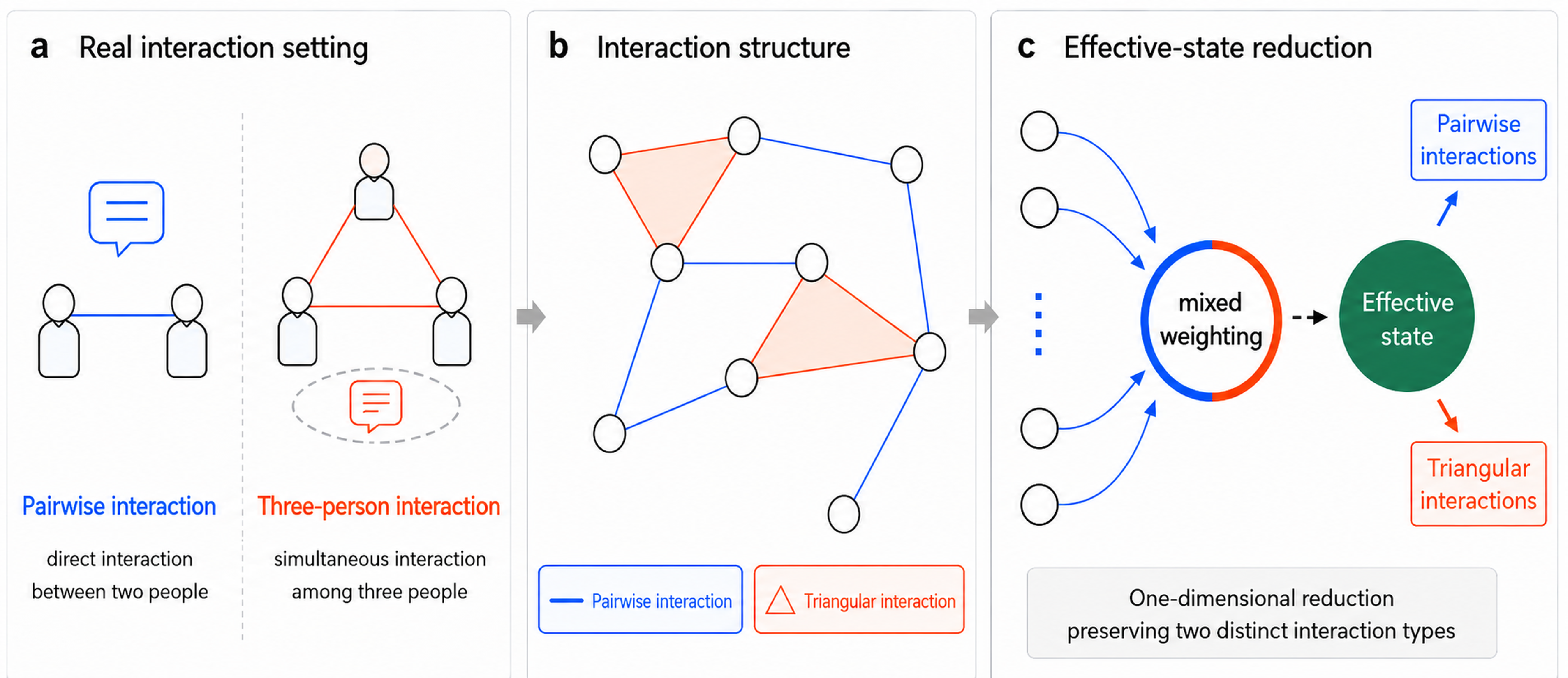}
    \captionof{figure}{{From mixed interactions to a one-dimensional effective-state description. (a) A connection between two individuals represents pairwise interaction, whereas a triangular interaction represents a simultaneous interaction among three individuals. (b) These two interaction types are mapped onto a simplicial network, with blue edges indicating pairwise connections and orange-filled triangles indicating three-node interactions. (c) The node states are combined through a mixed weighting procedure to define a single effective state. Although the system is reduced to one dimension, the pairwise and triangular contributions remain distinct in the reduced dynamics.}
    }
    \label{fig1}
\end{minipage}
\vspace{1em}

\par
A further limitation concerns the form of the interactions themselves. Standard network models represent interactions through edges, but pairwise coupling is not sufficient to describe many collective processes. The state of a unit may depend on the simultaneous states of several neighbouring units. Such higher-order interactions are commonly represented using hypergraphs, simplicial complexes, and related higher-order network structures
\cite{battiston2021natphys,boccaletti2023structure,bick2023higher,
Gracht2024,Majhi2022,Krishnagopal2023HigherOrderMultiplex}.
Their presence can change the qualitative nature of the dynamics. For example, spreading processes on simplicial complexes and hypergraphs may exhibit discontinuous transitions, bistability, and hysteresis
\cite{iacopini2019simplicial,arruda2020contagion,Fang2024SocialContagion}.
Higher-order interactions can also alter coexistence and stability in ecological systems
\cite{grilli2017higher},
and may produce epidemic thresholds that differ from those predicted by pairwise models
\cite{higham2021epidemics,sun2021hypergraph}.

These developments have motivated low-dimensional descriptions of systems with higher-order interactions. In particular, Ghosh \textit{et al.} developed a one-dimensional reduction for higher-order contagious processes and showed that an effective triangular interaction strength can capture epidemic thresholds, abrupt transitions, and bistable behaviour
\cite{ghosh2023chaos}.
Their result demonstrated that collective higher-order contagion can be represented through a compact scalar equation. More recently, Tiwari \textit{et al.} developed a dimension-reduction framework for dynamical networks with purely higher-order interactions
\cite{Tiwari2026RSPA}.
The problem considered in this paper extends this line of work in a different direction: pairwise and triangular interactions coexist and act simultaneously within the same node dynamics.
Existing effective-state reductions generally formulate the collective variable for one interaction order (pairwise or higher order) at a time. Extending such a description to a simplicial complex with coexisting pairwise and triangular interaction requires more than adding another coupling term to the scalar equation. The effective state must account for the fact that a node may be structurally important through its pairwise degree, its triangular degree, or both. Moreover, the pairwise and triangular contributions involve different neighbourhood averages and therefore introduce different closure errors. A useful reduction should preserve these distinctions while remaining low-dimensional.

\par
Here, we derive a one-dimensional effective-state reduction for nonlinear dynamics on simplicial complexes with coexisting pairwise and triangular interactions. Each node is assigned a mixed structural weight formed from its pairwise and triangular degrees. This construction defines a single effective state, while the reduced equation retains two distinct structural coefficients:  associated with both pairwise coupling and triangular coupling. A schematic representation of the mixed interaction structure is shown in Fig.~\ref{fig1}.
%The reduction is obtained through a fluctuation expansion about the effective state. This yields explicit closure conditions for the intrinsic dynamics, the pairwise interaction term, and the triangular interaction term. 
The reduction also identifies the terms neglected by the one-dimensional approximation and relates the resulting error to the dispersion of node states and their structural weighting. These diagnostics are important because close agreement between a reduced steady state and a particular network realisation does not, by itself, explain why the approximation works or under what conditions it may break down.

\par
We test the framework on three nonlinear models with distinct transition mechanisms: a gene-regulatory system
\cite{Meena2023stability,gao2016universal},
the double-well system
\cite{kundu2022rspa},
and the susceptible--infected--susceptible (SIS) model
\cite{ghosh2023chaos,Meena2023stability,pastor2015rmp}.
The gene-regulatory model combines degradation with Hill-type activation. The double-well model tests whether the reduction preserves two stable branches separated by an unstable branch, whereas the SIS model represents a threshold-driven spreading process with both pairwise and higher-order infection channels.

\par
For each model, we compare the full-network dynamics with the reduced equation using coupling-parameter sweeps, steady-state branch calculations, and progressive node-removal experiments on synthetic and real-world networks. We also evaluate the closure terms directly to relate the observed steady-state error to the approximations used in the derivation. The reduced model reproduces the principal branches and transition trends of all three systems. Agreement is strongest when the weighted node states remain concentrated around the effective state and becomes weaker when structural heterogeneity produces a broader state distribution. The resulting framework provides a compact description of resilience and critical transitions while retaining the separate roles of pairwise and higher-order interactions.

\section{Model}
\label{sec:model}

We consider a network of \(N\) nodes, where the state of node \(i\) is
denoted by \(x_i\). Each node evolves under intrinsic dynamics, pairwise
coupling, and triangular higher-order coupling:
\begin{equation}
\dot{x}_i
=
F(x_i)
+
D\sum_j A_{ij}G(x_i,x_j)
+
D_{\Delta}\sum_{j,l}A_{ijl}H(x_i,x_j,x_l),
\qquad
i=1,\ldots,N.
\label{eq:full_model}
\end{equation}
here, \(F\) denotes the intrinsic dynamics, \(G\) is the pairwise
interaction function, and \(H\) is the triangular higher-order interaction
function. The adjacency matrix \(A_{ij}\) represents pairwise interactions,
whereas \(A_{ijl}\) denotes the third-order adjacency tensor. The parameters
\(D\) and \(D_{\Delta}\) control the pairwise and higher-order coupling
strengths, respectively.
For each node, we define the pairwise degree and triangular degree as
\begin{equation}
k_i=\sum_j A_{ij},
\qquad
k_i^{\Delta}=\sum_{j,l}A_{ijl}.
\label{eq:degree_definition}
\end{equation}

To account for both structural contributions in the collective state, we
introduce the mixed structural weight
\begin{equation}
w_i=k_i+k_i^{\Delta}.
\label{eq:weight_definition}
\end{equation}
For any node-level quantity \(Y_i\), define the corresponding weighted
average by
\[
T_w(Y_i)
=
\frac{\sum_i w_iY_i}{\sum_i w_i}.
\]
The effective state is then defined as
\begin{equation}
x
=
x_{\mathrm{eff}}
=
T_w(x_i)
=
\frac{\sum_i w_ix_i}{\sum_i w_i}
=
\frac{\sum_i(k_i+k_i^{\Delta})x_i}
     {\sum_i(k_i+k_i^{\Delta})}.
\label{eq:effective_state}
\end{equation}
Thus, nodes with larger pairwise or triangular degree contribute more
strongly to the reduced state.
Differentiating Eq.~\eqref{eq:effective_state} and substituting the full
dynamics from Eq.~\eqref{eq:full_model}, we obtain
\begin{equation}
\dot{x}=T_F+T_G+T_H,
\label{eq:effective_dynamics}
\end{equation}
where
\begin{align*}
T_F
&=
\frac{\sum_i w_iF(x_i)}{\sum_iw_i},
\\
T_G
&=
\frac{
D\sum_iw_i\sum_jA_{ij}G(x_i,x_j)
}{
\sum_iw_i
},
\\
T_H
&=
\frac{
D_{\Delta}\sum_iw_i\sum_{j,l}A_{ijl}H(x_i,x_j,x_l)
}{
\sum_iw_i
}.
\end{align*}
The objective is to approximate these three weighted contributions using
only the effective state \(x\).

\subsection{Weighted Fluctuation Decomposition}

We express each node state as
\begin{equation}
x_i=x+\xi_i,
\qquad
x=x_{\mathrm{eff}},
\label{eq:fluctuation_decomposition}
\end{equation}
where \(\xi_i\) denotes the fluctuation of node \(i\) around the effective
state.

\begin{lemma}[Weighted fluctuation cancellation]
\label{lem:weighted_cancellation}
Let the mixed structural weight be \(w_i=k_i+k_i^{\Delta}\), and let the
effective state be defined by Eq.~\eqref{eq:effective_state}. Under the
decomposition in Eq.~\eqref{eq:fluctuation_decomposition}, the weighted
fluctuation satisfies $T_w(\xi_i)=0.$

\end{lemma}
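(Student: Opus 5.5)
The plan is to compute $T_w(\xi_i)$ straight from the definitions, using only two facts: $T_w$ is linear, and it preserves constants. Write $W=\sum_i w_i$. First I would check that $W>0$, so that $T_w$ and the effective state in Eq.~\eqref{eq:effective_state} are well defined. Both $k_i$ and $k_i^{\Delta}$ are sums of nonnegative adjacency entries, so $w_i\ge 0$ for every $i$. If the complex has at least one edge or one triangle, some $w_i>0$ and hence $W>0$. I would state this as the standing nondegeneracy assumption that is already implicit in Eq.~\eqref{eq:effective_state}.

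Next, I would substitute $\xi_i=x_i-x$ from Eq.~\eqref{eq:fluctuation_decomposition} into the weighted average and split the sum:
\[
T_w(\xi_i)=\frac{\sum_i w_i(x_i-x)}{W}=\frac{\sum_i w_i x_i}{W}-x\,\frac{\sum_i w_i}{W}=T_w(x_i)-x .
\]
The step that matters here is that $x=x_{\mathrm{eff}}$ is a single scalar, the same for every node. It can therefore be pulled out of the sum, and its weighted average is $x\cdot W/W=x$.

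Finally, by the definition in Eq.~\eqref{eq:effective_state}, $T_w(x_i)=x$, so $T_w(\xi_i)=x-x=0$, which is the claim. There is no real obstacle: the argument is just linearity of the normalized weighted sum. The only points that need care are the well-definedness condition $W>0$ and keeping track of the fact that $x$ does not depend on $i$.
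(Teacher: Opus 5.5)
Your proof is correct and follows the same route as the paper. Both write $T_w(\xi_i)=T_w(x_i)-x$, using that $x$ does not depend on $i$, and then apply $T_w(x_i)=x$. Your explicit check that $\sum_i w_i>0$ is a small addition that the paper leaves implicit in its definition of the effective state.
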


\begin{proof}
Using \(x_i=x+\xi_i\), we have
\[
T_w(\xi_i)
=
T_w(x_i-x).
\]
Since \(x\) is independent of the node index \(i\),
\[
T_w(\xi_i)
=
T_w(x_i)-x.
\]
By the definition of the effective state,
\(T_w(x_i)=x\). Therefore,
\[
T_w(\xi_i)=x-x=0.
\]
\end{proof}
Equivalently, Lemma~\ref{lem:weighted_cancellation} gives the exact identity
\begin{equation}
\sum_i(k_i+k_i^{\Delta})\xi_i=0.
\label{eq:weighted_cancellation}
\end{equation}
This cancellation eliminates the first-order weighted fluctuation in the
intrinsic contribution and supports the leading-order closure developed
below.

\subsection{Leading-Order Closure}

We first consider the intrinsic contribution. Expanding \(F(x_i)\) around
the effective state \(x\), we obtain
\[
F(x_i)
=
F(x+\xi_i)
=
F(x)
+
F'(x)\xi_i
+
\frac{1}{2}F''(x)\xi_i^2
+
\mathcal{O}(\xi_i^3).
\]
Taking the weighted average gives
\[
T_F
=
F(x)
+
F'(x)T_w(\xi_i)
+
\frac{1}{2}F''(x)T_w(\xi_i^2)
+
\mathcal{O}\!\left(T_w(|\xi_i|^3)\right).
\]
Using Lemma~\ref{lem:weighted_cancellation}, the first-order term vanishes
exactly. Therefore, to leading order in the state fluctuations,
\[
T_F\approx F(x).
\]
For the pairwise interaction, we expand
\[
G(x_i,x_j)
=
G(x,x)
+
G_1(x,x)\xi_i
+
G_2(x,x)\xi_j
+
\mathcal{O}(\xi^2),
\]
where \(G_1\) and \(G_2\) denote the partial derivatives of \(G\) with
respect to its first and second arguments.
Under the leading-order pairwise closure
\[
G(x_i,x_j)\approx G(x,x),
\]
the pairwise contribution becomes
\[
T_G
\approx
D
\frac{
\sum_iw_i\sum_jA_{ij}
}{
\sum_iw_i
}
G(x,x).
\]
Using \(k_i=\sum_jA_{ij}\), this gives
\[
T_G
\approx
D
\frac{\sum_iw_ik_i}{\sum_iw_i}
G(x,x).
\]
Similarly, for the triangular interaction,
\[
H(x_i,x_j,x_l)
=
H(x,x,x)
+
H_1(x,x,x)\xi_i
+
H_2(x,x,x)\xi_j
+
H_3(x,x,x)\xi_l
+
\mathcal{O}(\xi^2),
\]
where \(H_1\), \(H_2\), and \(H_3\) denote the partial derivatives of \(H\)
with respect to its three arguments.
Under the leading-order higher-order closure
\[
H(x_i,x_j,x_l)\approx H(x,x,x),
\]
we obtain
\[
T_H
\approx
D_{\Delta}
\frac{
\sum_iw_i\sum_{j,l}A_{ijl}
}{
\sum_iw_i
}
H(x,x,x).
\]

Using \(k_i^{\Delta}=\sum_{j,l}A_{ijl}\), this becomes
\[
T_H
\approx
D_{\Delta}
\frac{\sum_iw_ik_i^{\Delta}}{\sum_iw_i}
H(x,x,x).
\]
Since \(w_i=k_i+k_i^{\Delta}\), the structural products satisfy
\[
w_ik_i
=
(k_i+k_i^{\Delta})k_i
=
k_i^2+k_ik_i^{\Delta},
\]
and
\[
w_ik_i^{\Delta}
=
(k_i+k_i^{\Delta})k_i^{\Delta}
=
k_ik_i^{\Delta}+(k_i^{\Delta})^2.
\]
The effective pairwise and higher-order structural coefficients are
therefore
\begin{equation}
\beta_G
=
\frac{
\sum_i k_i^2
+
\sum_i k_ik_i^{\Delta}
}{
\sum_i k_i
+
\sum_i k_i^{\Delta}
},
\qquad
\beta_H
=
\frac{
\sum_i k_ik_i^{\Delta}
+
\sum_i(k_i^{\Delta})^2
}{
\sum_i k_i
+
\sum_i k_i^{\Delta}
}.
\label{eq:effective_coefficients}
\end{equation}
Here, \(\beta_G\) represents the parameter dominated by pairwise structure, whereas
\(\beta_H\) represents the parameter dominated by triangular structure.
Substituting the leading-order approximations into
Eq.~\eqref{eq:effective_dynamics} gives the final reduced equation
\begin{equation}
\dot{x}
=
F(x)
+
D\beta_GG(x,x)
+
D_{\Delta}\beta_HH(x,x,x).
\label{eq:reduced_model}
\end{equation}

Thus, the original \(N\)-dimensional system is reduced to a single scalar
equation, while the pairwise and higher-order structural effects remain
separated through \(\beta_G\) and \(\beta_H\).

\subsection{Approximation Assumptions and Closure Conditions}

The reduction uses the following three leading-order closures:
\[
\mathrm{A}_1:\qquad T_F \approx F(x),
\]
\[
\mathrm{A}_2:\qquad G(x_i,x_j)\approx G(x,x),
\]
and
\[
\mathrm{A}_3:\qquad
H(x_i,x_j,x_l)\approx H(x,x,x).
\]
Condition \(\mathrm{A}_1\) approximates the weighted intrinsic contribution
by the intrinsic dynamics at the effective state. Conditions
\(\mathrm{A}_2\) and \(\mathrm{A}_3\) replace the pairwise and triangular
interactions by their values at the effective state.
The decomposition \(x_i=x+\xi_i\) and the exact identity
\[
\sum_i (k_i+k_i^{\Delta})\xi_i=0
\]
are used in the derivation but are not additional closure assumptions.
Therefore, the numerical analysis validates only
\(\mathrm{A}_1\)--\(\mathrm{A}_3\).
\section{Approximation Validation}

\subsection{Common Validation Notation}

For all three dynamical systems, we use the mixed structural weight
\[
w_i=k_i+k_i^{\Delta},
\]
and define the weighted average of a node-level quantity \(Y_i\) as
\[
T_w(Y_i)
=
\frac{\sum_i w_iY_i}{\sum_iw_i}.
\]

The effective state is
\[
x=x_{\mathrm{eff}}=T_w(x_i)
=
\frac{\sum_iw_ix_i}{\sum_iw_i},
\]
while the effective pairwise and higher-order structural coefficients are
\[
\beta_G=T_w(k_i)
=
\frac{\sum_iw_ik_i}{\sum_iw_i},
\qquad
\beta_H=T_w(k_i^{\Delta})
=
\frac{\sum_iw_ik_i^{\Delta}}{\sum_iw_i}.
\]

Writing
\[
x_i=x+\xi_i,
\]
the definition of \(x_{\mathrm{eff}}\) gives the exact identity
\[
T_w(\xi_i)=0.
\]

This identity is used in the derivation but is not treated as an
approximation. We validate only the three closures
\[
\mathrm{A}_1:\quad
T_F^{\mathrm{full}}\approx T_F^{\mathrm{red}},
\]
\[
\mathrm{A}_2:\quad
T_G^{\mathrm{full}}\approx T_G^{\mathrm{red}},
\]
and
\[
\mathrm{A}_3:\quad
T_H^{\mathrm{full}}\approx T_H^{\mathrm{red}}.
\]

In each diagnostic plot, the reduced term is placed on the \(x\)-axis and
the corresponding full-network weighted term on the \(y\)-axis. The
diagonal \(y=x\) represents exact agreement.

% ============================================================

\subsection{Approximation Validation for the Gene-Regulatory Model}

The full gene-regulatory dynamics are
\[
\dot{x}_i
=
-Bx_i^f
+
D\sum_jA_{ij}
\frac{x_j^h}{1+x_j^h}
+
\frac{D_{\Delta}}{2}
\sum_{j<l}A_{ijl}
\frac{(x_jx_l)^h}{1+(x_jx_l)^h}.
\]

Here,
\[
F(x_i)=-Bx_i^f,
\qquad
G(x_i,x_j)=\frac{x_j^h}{1+x_j^h},
\qquad
H(x_i,x_j,x_l)
=
\frac{(x_jx_l)^h}{1+(x_jx_l)^h}.
\]

The reduced equation is
\[
\dot{x}
=
-Bx^f
+
D\beta_G\frac{x^h}{1+x^h}
+
\frac{D_{\Delta}}{2}
\beta_H\frac{x^{2h}}{1+x^{2h}}.
\]

\subsubsection*{Intrinsic Closure}

The full and reduced intrinsic terms are
\[
T_F^{\mathrm{full}}
=
T_w(-Bx_i^f),
\qquad
T_F^{\mathrm{red}}
=
-Bx^f.
\]

For \(B\neq0\),
\[
R_{A_1}^{\mathrm{Gene}}
=
\frac{
T_w(x_i^f)
}{
x^f
}.
\]

The intrinsic closure is accurate when
\[
\left|
T_w(x_i^f)-x^f
\right|
\ll
|x^f|.
\]

For \(f=1\), it is exact because \(T_w(x_i)=x\).

\subsubsection*{Pairwise Closure}

The full and reduced pairwise terms are
\[
T_G^{\mathrm{full}}
=
D\,T_w\left(
\sum_jA_{ij}
\frac{x_j^h}{1+x_j^h}
\right),
\]
\[
T_G^{\mathrm{red}}
=
D\beta_G
\frac{x^h}{1+x^h}.
\]

For \(D\neq0\),
\[
R_{A_2}^{\mathrm{Gene}}
=
\frac{
T_w\left(
\sum_jA_{ij}
\frac{x_j^h}{1+x_j^h}
\right)
}{
\displaystyle
\beta_G\frac{x^h}{1+x^h}
}.
\]

The pairwise closure is accurate when
\[
\left|
T_w\left(
\sum_jA_{ij}
\frac{x_j^h}{1+x_j^h}
\right)
-
\beta_G\frac{x^h}{1+x^h}
\right|
\ll
\left|
\beta_G\frac{x^h}{1+x^h}
\right|.
\]

\subsubsection*{Higher-Order Closure}

The full and reduced higher-order terms are
\[
T_H^{\mathrm{full}}
=
\frac{D_{\Delta}}{2}
T_w\left(
\sum_{j<l}A_{ijl}
\frac{(x_jx_l)^h}
     {1+(x_jx_l)^h}
\right),
\]
\[
T_H^{\mathrm{red}}
=
\frac{D_{\Delta}}{2}
\beta_H
\frac{x^{2h}}{1+x^{2h}}.
\]

For \(D_{\Delta}\neq0\),
\[
R_{A_3}^{\mathrm{Gene}}
=
\frac{
T_w\left(
\sum_{j<l}A_{ijl}
\frac{(x_jx_l)^h}
     {1+(x_jx_l)^h}
\right)
}{
\displaystyle
\beta_H
\frac{x^{2h}}{1+x^{2h}}
}.
\]

The higher-order closure is accurate when
\[
\left|
T_w\left(
\sum_{j<l}A_{ijl}
\frac{(x_jx_l)^h}
     {1+(x_jx_l)^h}
\right)
-
\beta_H\frac{x^{2h}}{1+x^{2h}}
\right|
\ll
\left|
\beta_H\frac{x^{2h}}{1+x^{2h}}
\right|.
\]
% ============================================================
\subsection{Approximation Validation for the Double-Well Model}

The full double-well dynamics are
\[
\dot{x}_i
=
-(x_i-r_1)(x_i-r_2)(x_i-r_3)
+
D\sum_jA_{ij}x_j
+
\frac{D_{\Delta}}{2}
\sum_{j<l}A_{ijl}x_jx_l.
\]

Here,
\[
F(x_i)
=
-(x_i-r_1)(x_i-r_2)(x_i-r_3),
\qquad
G(x_i,x_j)=x_j,
\qquad
H(x_i,x_j,x_l)=x_jx_l.
\]

The reduced equation is
\[
\dot{x}
=
-(x-r_1)(x-r_2)(x-r_3)
+
D\beta_Gx
+
\frac{D_{\Delta}}{2}\beta_Hx^2.
\]

\subsubsection*{Intrinsic Closure}

Let
\[
a=r_1+r_2+r_3,
\qquad
b=r_1r_2+r_2r_3+r_3r_1,
\qquad
c=r_1r_2r_3.
\]

Then
\[
F(x)=-x^3+ax^2-bx+c.
\]

The full and reduced intrinsic terms are
\[
T_F^{\mathrm{full}}=T_w(F(x_i)),
\qquad
T_F^{\mathrm{red}}=F(x).
\]

Using \(x_i=x+\xi_i\) and \(T_w(\xi_i)=0\),
\[
T_F^{\mathrm{full}}
=
F(x)
+
(a-3x)T_w(\xi_i^2)
-
T_w(\xi_i^3).
\]

Therefore,
\[
R_{A_1}^{\mathrm{DW}}
=
1+
\frac{
(a-3x)T_w(\xi_i^2)-T_w(\xi_i^3)
}{
F(x)
}.
\]

The intrinsic closure is accurate when
\[
\left|
(a-3x)T_w(\xi_i^2)-T_w(\xi_i^3)
\right|
\ll
|F(x)|.
\]

\subsubsection*{Pairwise Closure}

The full and reduced pairwise terms are
\[
T_G^{\mathrm{full}}
=
D\,T_w\left(
\sum_jA_{ij}x_j
\right),
\qquad
T_G^{\mathrm{red}}
=
D\beta_Gx.
\]

For \(D\neq0\),
\[
R_{A_2}^{\mathrm{DW}}
=
1+
\frac{
T_w\left(
\sum_jA_{ij}\xi_j
\right)
}{
\beta_Gx
}.
\]

The pairwise closure is accurate when
\[
\left|
T_w\left(
\sum_jA_{ij}\xi_j
\right)
\right|
\ll
|\beta_Gx|.
\]

\subsubsection*{Higher-Order Closure}

The full and reduced higher-order terms are
\[
T_H^{\mathrm{full}}
=
\frac{D_{\Delta}}{2}
T_w\left(
\sum_{j<l}A_{ijl}x_jx_l
\right),
\qquad
T_H^{\mathrm{red}}
=
\frac{D_{\Delta}}{2}
\beta_Hx^2.
\]

For \(D_{\Delta}\neq0\),
\[
R_{A_3}^{\mathrm{DW}}
=
1+
\frac{
xT_w\left(
\sum_{j<l}A_{ijl}(\xi_j+\xi_l)
\right)
+
T_w\left(
\sum_{j<l}A_{ijl}\xi_j\xi_l
\right)
}{
\beta_Hx^2
}.
\]

The higher-order closure is accurate when
\[
\left|
xT_w\left(
\sum_{j<l}A_{ijl}(\xi_j+\xi_l)
\right)
+
T_w\left(
\sum_{j<l}A_{ijl}\xi_j\xi_l
\right)
\right|
\ll
|\beta_Hx^2|.
\]
\subsection{Approximation Validation for the SIS Model}

The full SIS dynamics are
\[
\dot{x}_i
=
-\mu x_i
+
\lambda(1-x_i)\sum_jA_{ij}x_j
+
\frac{\gamma}{2}(1-x_i)
\sum_{j<l}A_{ijl}x_jx_l.
\]

Here,
\[
F(x_i)=-\mu x_i,
\qquad
G(x_i,x_j)=(1-x_i)x_j,
\qquad
H(x_i,x_j,x_l)=(1-x_i)x_jx_l.
\]

The reduced equation is
\[
\dot{x}
=
-\mu x
+
\lambda\beta_G(1-x)x
+
\frac{\gamma}{2}\beta_H(1-x)x^2.
\]

\subsubsection*{Intrinsic Closure}

The full and reduced intrinsic terms are
\[
T_F^{\mathrm{full}}=T_w(-\mu x_i),
\qquad
T_F^{\mathrm{red}}=-\mu x.
\]

Since \(T_w(x_i)=x\),
\[
T_F^{\mathrm{full}}
=
-\mu T_w(x_i)
=
-\mu x
=
T_F^{\mathrm{red}}.
\]

Thus,
\[
R_{A_1}^{\mathrm{SIS}}
=
\frac{T_F^{\mathrm{full}}}
     {T_F^{\mathrm{red}}}
=
1.
\]

Therefore, the intrinsic closure is exact because the recovery term is
linear.

\subsubsection*{Pairwise Closure}

The full and reduced pairwise terms are
\[
T_G^{\mathrm{full}}
=
\lambda T_w\left(
(1-x_i)\sum_jA_{ij}x_j
\right),
\qquad
T_G^{\mathrm{red}}
=
\lambda\beta_G(1-x)x.
\]

Using \(x_i=x+\xi_i\) and \(x_j=x+\xi_j\),
\[
(1-x_i)\sum_jA_{ij}x_j
=
(1-x-\xi_i)
\left(
k_ix+\sum_jA_{ij}\xi_j
\right).
\]

Taking the weighted average gives
\begin{align*}
T_w\left(
(1-x_i)\sum_jA_{ij}x_j
\right)
={}&
\beta_G(1-x)x
+
(1-x)T_w\left(\sum_jA_{ij}\xi_j\right)
\\
&-
xT_w(k_i\xi_i)
-
T_w\left(
\xi_i\sum_jA_{ij}\xi_j
\right).
\end{align*}

Define
\begin{align*}
\mathcal{C}_{A_2}^{\mathrm{SIS}}
={}&
(1-x)T_w\left(\sum_jA_{ij}\xi_j\right)
-
xT_w(k_i\xi_i)
\\
&-
T_w\left(
\xi_i\sum_jA_{ij}\xi_j
\right).
\end{align*}

Then
\[
R_{A_2}^{\mathrm{SIS}}
=
1+
\frac{
\mathcal{C}_{A_2}^{\mathrm{SIS}}
}{
\beta_G(1-x)x
}.
\]

The pairwise closure is accurate when
\[
\left|
\mathcal{C}_{A_2}^{\mathrm{SIS}}
\right|
\ll
\left|
\beta_G(1-x)x
\right|.
\]

\subsubsection*{Higher-Order Closure}

The full and reduced higher-order terms are
\[
T_H^{\mathrm{full}}
=
\frac{\gamma}{2}
T_w\left(
(1-x_i)\sum_{j<l}A_{ijl}x_jx_l
\right),
\]
\[
T_H^{\mathrm{red}}
=
\frac{\gamma}{2}
\beta_H(1-x)x^2.
\]

Define
\[
U_i
=
\sum_{j<l}A_{ijl}(\xi_j+\xi_l),
\qquad
V_i
=
\sum_{j<l}A_{ijl}\xi_j\xi_l.
\]

Then
\[
\sum_{j<l}A_{ijl}x_jx_l
=
k_i^{\Delta}x^2+xU_i+V_i.
\]

Therefore,
\begin{align*}
T_w\left(
(1-x_i)\sum_{j<l}A_{ijl}x_jx_l
\right)
={}&
\beta_H(1-x)x^2
+
(1-x)xT_w(U_i)
\\
&+
(1-x)T_w(V_i)
-
x^2T_w(k_i^{\Delta}\xi_i)
\\
&-
xT_w(\xi_iU_i)
-
T_w(\xi_iV_i).
\end{align*}

Define
\begin{align*}
\mathcal{C}_{A_3}^{\mathrm{SIS}}
={}&
(1-x)xT_w(U_i)
+
(1-x)T_w(V_i)
\\
&-
x^2T_w(k_i^{\Delta}\xi_i)
-
xT_w(\xi_iU_i)
-
T_w(\xi_iV_i).
\end{align*}

Then
\[
R_{A_3}^{\mathrm{SIS}}
=
1+
\frac{
\mathcal{C}_{A_3}^{\mathrm{SIS}}
}{
\beta_H(1-x)x^2
}.
\]

The higher-order closure is accurate when
\[
\left|
\mathcal{C}_{A_3}^{\mathrm{SIS}}
\right|
\ll
\left|
\beta_H(1-x)x^2
\right|.
\]

% ============================================================
\section{Numerical Validation}
\label{sec:numerical_validation}

We validate the proposed one-dimensional reduction on three nonlinear
dynamical systems with coexisting pairwise and higher-order interactions:
a gene-regulatory system, a double-well system, and the
susceptible--infected--susceptible (SIS) model. These systems exhibit
distinct dynamical characteristics, including nonlinear activation and
degradation, bistability, and threshold-driven spreading. For each
system, the effective steady state obtained from the full
simplicial-complex dynamics is compared with the corresponding steady
state predicted by the reduced equation.
The numerical analysis is carried out in two stages. First, we study
the response of the full and reduced systems under variations of the
pairwise and higher-order interaction parameters. These parameter-sweep
experiments are performed separately on synthetic (Erd\H{o}s--R\'enyi (ER) and
Barab\'asi--Albert (BA) networks) and real world networks. The pairwise and higher-order
interaction parameters are varied individually, and two-parameter
heatmaps are additionally constructed by varying both parameters
simultaneously.
Second, we examine the system's resilience under progressive structural
perturbation. These experiments are performed on ER and BA networks
as well as on real-world networks. Nodes are progressively removed,
the remaining interaction structure is reconstructed after each
perturbation step, and the effective structural coefficients
$\beta_G$ and $\beta_H$ are recomputed. The effective steady states
of the full and reduced systems are then recorded throughout the
perturbation process.

The three dynamical models are considered in the following order:
the gene-regulatory system, the double-well system, and the SIS model.
For each system, we first present the parameter-sweep and heatmap
calculations, followed by the corresponding network-perturbation
analysis.

\subsection{Common Numerical Protocol}
\label{subsec:common_protocol}

The same numerical procedure is applied to all three dynamical systems.
Synthetic simplicial complexes are constructed from
Erd\H{o}s--R\'enyi (ER) and Barab\'asi--Albert (BA) networks. Unless
otherwise stated, the synthetic networks contain $N=500$ nodes. For
the ER networks, the connection probability is $p=0.1$, whereas for
the BA networks the growth parameter is $m=16$. Higher-order
interactions are defined on the triangular structures of the
corresponding networks.

For a given network and set of dynamical parameters, the full
node-level system is numerically integrated until a steady state is
reached. The resulting network state is then mapped to the effective
state using the mixed structural weighting introduced in the
reduction. The one-dimensional reduced equation is solved using the
same dynamical parameters together with the corresponding structural
coefficients $\beta_G$ and $\beta_H$. The full-network effective
steady state and the reduced steady state are then recorded for
comparison.

For the system-parameter analysis, the underlying network is kept
fixed while the interaction parameters (coupling strength) are varied.  First, the pairwise interaction
parameter is varied while the higher-order parameter is held fixed.
The higher-order interaction parameter is then varied while the
pairwise parameter is held fixed. These sweeps are carried out for
both ER and BA networks. For systems that admit multiple stable states,
the full and reduced models are initialized from both low and high
initial conditions so that the corresponding steady-state branches
can be followed.

The one-parameter sweeps are complemented by a two-parameter analysis
in which the pairwise and higher-order interaction parameters are
varied simultaneously. At each point in the parameter plane, the
one-dimensional reduced equation is solved from low and high initial
conditions, yielding the corresponding steady states
$x_{\mathrm{low}}$ and $x_{\mathrm{high}}$. The separation between
the two steady-state branches is quantified as
\begin{equation}
\Delta x
=
\left|
x_{\mathrm{high}}-x_{\mathrm{low}}
\right|.
\end{equation}
To facilitate comparison over the parameter plane, this quantity is
normalized according to
\begin{equation}
\Delta x_{\mathrm{norm}}
=
\frac{
\left|x_{\mathrm{high}}-x_{\mathrm{low}}\right|
}{
\displaystyle
\max_{D,D_{\Delta}}
\left|x_{\mathrm{high}}-x_{\mathrm{low}}\right|
}.
\end{equation}
The resulting normalized bistability strength is represented using a
heatmap, where values close to zero indicate convergence of both
initial conditions to the same steady state, while larger values
indicate a stronger separation between the two steady-state branches. The corresponding threshold contour is shown as a white dotted curve
in the two-parameter heatmaps and marks the numerical transition
between the bistable and monostable regimes.

Following the parameter-based calculations, node-removal experiments
are performed on ER, BA, and real-world networks. At each perturbation
step, $1\%$ of the nodes of the original network are removed, and the
largest connected component of the remaining network is retained.
The pairwise and triangular degrees are recomputed on the resulting
structure, from which the updated values of $\beta_G$ and $\beta_H$
are obtained. The full dynamical system is then evaluated on the
perturbed network, while the reduced equation is evaluated using the
corresponding updated structural coefficients.
This procedure generates a sequence of progressively perturbed
networks and allows the effective state to be tracked together with
the corresponding values of $\beta_G$ and $\beta_H$. The perturbation
process is continued until only $1\%$ of the original nodes remain or
the remaining network no longer contains a nontrivial connected
component. The full and reduced responses are recorded at each stage
of this sequence.

In addition to the dynamical response, we separately track the
evolution of the effective structural coefficients $\beta_G$ and
$\beta_H$ during node removal. This provides a direct description of
how progressive structural perturbation modifies the pairwise and
higher-order structural contributions entering the reduced equation.
The corresponding evolution of $\beta_G$ and $\beta_H$ for the ER,
BA, and real-world networks is presented in
Appendix~\ref{Appendix A}.

The approximations introduced in deriving the one-dimensional reduced
equation are also examined. In particular, the intrinsic, pairwise,
and higher-order terms computed from the full network are compared
with their corresponding reduced expressions. In addition, the
difference between the full-network effective steady state and the
reduced steady state is quantified over the considered parameter
range. These calculations provide a direct numerical diagnostic of
the individual approximation steps involved in the reduction. The
detailed diagnostic results for the gene-regulatory, double-well,
and SIS systems are presented in Appendix~\ref{Appendix B}.
\subsection{Gene-Regulatory System}
\label{subsec:gene_system}

We first consider the gene-regulatory model with coexisting pairwise
and higher-order interactions,
\begin{equation}
\dot{x}_i
=
-Bx_i^f
+
D\sum_j A_{ij}
\frac{x_j^h}{1+x_j^h}
+
\frac{D_{\Delta}}{2}
\sum_{j<l}A_{ijl}
\frac{(x_jx_l)^h}{1+(x_jx_l)^h},
\qquad i=1,\ldots,N.
\label{eq:gene_full}
\end{equation}
The corresponding one-dimensional reduced equation is
\begin{equation}
\dot{x}
=
-Bx^f
+
D\beta_G
\frac{x^h}{1+x^h}
+
\frac{D_{\Delta}}{2}\beta_H
\frac{x^{2h}}{1+x^{2h}}.
\label{eq:gene_reduced}
\end{equation}
Here, $x_i$ denotes the expression state of node $i$, while $x$
denotes the effective network state. The first term represents
degradation, with $B$ controlling the degradation strength and $f$
its nonlinearity. The second and third terms describe pairwise and
triangular activation, respectively, through Hill-type response
functions. The coefficients $\beta_G$ and $\beta_H$ encode the
effective pairwise and higher-order structures of the underlying
simplicial complex. Throughout the gene-regulatory analysis, we set
\[
B=1,
\qquad
f=1,
\qquad
h=2.
\]

The steady states of the reduced system satisfy
\[
-Bx^f
+
D\beta_G\frac{x^h}{1+x^h}
+
\frac{D_{\Delta}}{2}\beta_H
\frac{x^{2h}}{1+x^{2h}}
=0.
\]
Thus, $x=0$ is always a fixed point. For $x\neq 0$, using
$B=1$, $f=1$, and $h=2$, the nonzero steady states satisfy
\[
-1
+
D\beta_G\frac{x}{1+x^2}
+
\frac{D_{\Delta}}{2}\beta_H
\frac{x^3}{1+x^4}
=0.
\]
Equivalently,
\[
1
=
D\beta_G\frac{x}{1+x^2}
+
\frac{D_{\Delta}}{2}\beta_H
\frac{x^3}{1+x^4}.
\]
Multiplying by $(1+x^2)(1+x^4)$ gives
\[
(1+x^2)(1+x^4)
=
D\beta_Gx(1+x^4)
+
\frac{D_{\Delta}}{2}\beta_Hx^3(1+x^2).
\]
Therefore, the physically admissible nonzero fixed points are the
positive real roots of
\begin{equation}
x^6
-
\left(
D\beta_G+\frac{D_{\Delta}}{2}\beta_H
\right)x^5
+x^4
-\frac{D_{\Delta}}{2}\beta_Hx^3
+x^2
-D\beta_Gx
+1
=0.
\label{eq:gene_nonzero_fixed_points}
\end{equation}
Hence, the complete fixed-point structure consists of the zero fixed
point $x=0$ and the positive real roots of
Eq.~\eqref{eq:gene_nonzero_fixed_points}.

This system provides a nonlinear activation--degradation test of the
reduction. In contrast to an epidemic model, its steady-state
structure is determined by the competition between degradation and
interaction-induced activation. Depending on the interaction
parameters, the dynamics may support distinct steady states reached
from low and high initial conditions. We therefore integrate both the
full and reduced systems from the initial conditions $x_i(0)=0.01$
and $x_i(0)=10.0$ and compare their resulting steady-state branches.

% ============================================================
\subsubsection{System-Parameter Sweep}
\label{subsubsec:gene_parameter_sweep}
% ============================================================

We first examine the response of the gene-regulatory dynamics while
the underlying network is kept fixed. For both the ER and BA
simplicial complexes, the pairwise coupling strength $D$ is varied at
a fixed value of $D_{\Delta}$, after which $D_{\Delta}$ is varied at
a fixed value of $D$. The full-network effective steady state is
compared with the corresponding solution of
Eq.~\eqref{eq:gene_reduced} for both low and high initial conditions.

As shown in Fig.~\ref{fig:gene_system_parameter}(a) and
Fig.~\ref{fig:gene_system_parameter}(b), the reduced model reproduces
the principal steady-state branches obtained from the ER and BA
networks as the pairwise coupling strength is varied. In particular,
it follows both the low- and high-initial-condition solutions and
captures the parameter region over which the two branches remain
distinct. The corresponding $D_{\Delta}$ sweeps in
Fig.~\ref{fig:gene_system_parameter}(d) and
Fig.~\ref{fig:gene_system_parameter}(e) show the same overall
behavior under variations of the higher-order coupling strength.
Thus, the reduced equation preserves the main activation-induced
branch structure for both homogeneous and heterogeneous synthetic
networks.

To examine the joint influence of the two coupling mechanisms, the
reduced equation is also solved over the full
$(D,D_{\Delta})$ parameter plane. The normalized branch separation
$\Delta x_{\mathrm{norm}}$, defined in
Subsection~\ref{subsec:common_protocol}, is shown in
Fig.~\ref{fig:gene_system_parameter}(c) and
Fig.~\ref{fig:gene_system_parameter}(f). Regions with
$\Delta x_{\mathrm{norm}}$ close to zero correspond to parameter
values for which the low and high initial conditions converge to the
same steady state. Larger values identify regions in which the two
initial conditions converge to separated steady-state branches.

\vspace{1em}
\noindent
\begin{minipage}{\linewidth}
    \centering
    \includegraphics[width=0.90\linewidth]{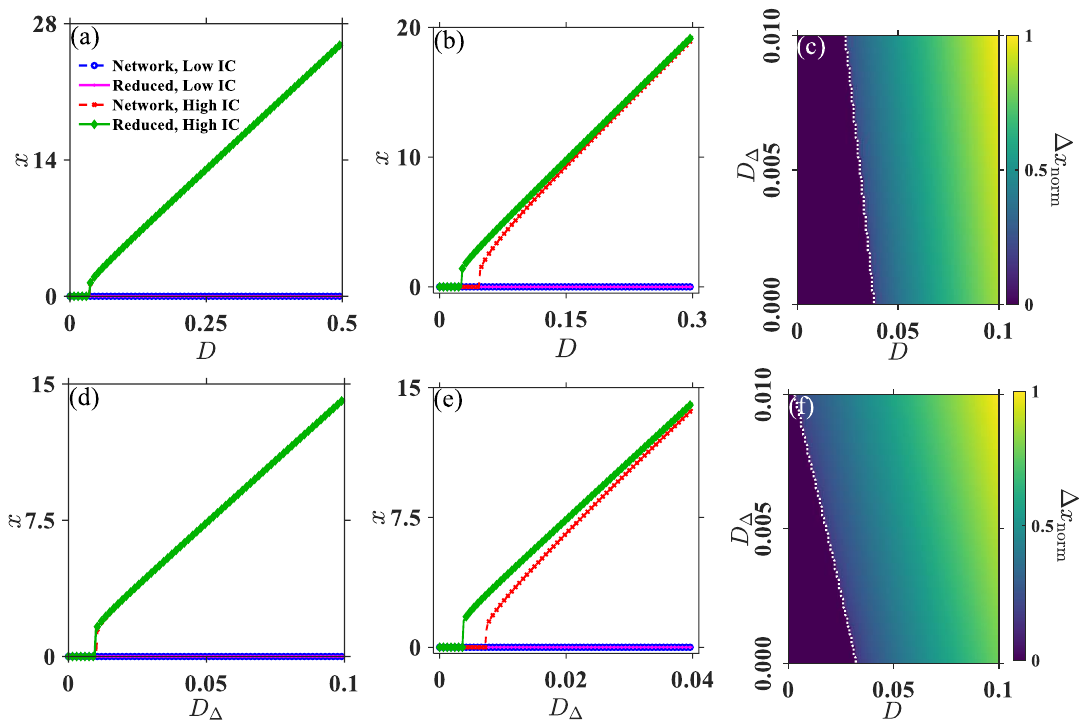}
    \captionof{figure}{
    System-parameter analysis of the gene-regulatory dynamics on ER
    and BA simplicial complexes. Panels (a) and (b) show the effective
    steady state $x$ as a function of the pairwise coupling strength
    $D$ for the ER and BA networks, respectively, with
    $D_{\Delta}$ held fixed. Panels (d) and (e) show the corresponding
    sweeps with respect to $D_{\Delta}$ at fixed $D$. Blue dashed
    curves with circles and magenta curves represent the full-network
    and reduced solutions obtained from low initial conditions,
    respectively. Red dashed curves and green curves with diamonds
    represent the corresponding high-initial-condition solutions.
    Panels (c) and (f) show the normalized bistability-strength maps
    obtained from the reduced model over the
    $(D,D_{\Delta})$ plane for the ER and BA networks, respectively.
    Values close to zero indicate convergence to the same steady
    state, whereas larger values indicate stronger separation between
    the low- and high-initial-condition branches. The white dotted
    contour marks the numerical transition between the bistable and
    monostable regimes.
    }
    \label{fig:gene_system_parameter}
\end{minipage}
\vspace{1em}

% ============================================================
\subsubsection{Network-Perturbation Test}
\label{subsubsec:gene_perturbation}
% ============================================================

We next examine whether the reduction remains accurate when the
underlying interaction structure is progressively degraded. The
node-removal protocol described in
Subsection~\ref{subsec:common_protocol} is applied to the ER and BA
networks and to a real biological network. At every perturbation step,
the largest connected component is retained, the pairwise and
triangular degrees are recalculated, and the corresponding effective
coefficients $\beta_G$ and $\beta_H$ are updated. The full dynamical
system is then integrated on the perturbed structure, while the
reduced equation is evaluated using the updated effective
coefficients.

For the real-network analysis, we use the
\href{https://networkrepository.com/bio-SC-LC.php}
{\texttt{bio-SC-LC} biological network}
from the Network Data Repository. The dataset represents a
gene-functional-association network and provides a heterogeneous
real-world structure against which the reduction can be tested beyond
the ER and BA ensembles. The same preprocessing and progressive
node-removal procedure are applied to this network as to the
synthetic networks.

Figure~\ref{fig:gene_perturbation}(a)--(c) shows the effective
steady states as functions of the evolving 
$\beta_G$ for the ER, BA, and \texttt{bio-SC-LC} networks,
respectively. Figure~\ref{fig:gene_perturbation}(d)--(f) shows the
corresponding states as functions of the 
$\beta_H$. In each case, the reduced solution follows the main
full-network response under progressive node removal and reproduces
the branch structure associated with the low and high initial
conditions. The agreement on the real network is particularly
important because its heterogeneous organization differs from the
controlled structures of the synthetic ensembles.

The results also show how the two effective structural coefficients
change along the perturbation sequence. Their evolution reflects the
simultaneous loss of pairwise connections and triangular interaction
structures as nodes are removed. Consequently, the perturbation test
evaluates not only the dynamical accuracy of the reduced equation but
also whether the updated coefficients $\beta_G$ and $\beta_H$
adequately summarize the changing interaction structure.

\vspace{1em}
\noindent
\begin{minipage}{\linewidth}
    \centering
    \includegraphics[width=0.90\linewidth]{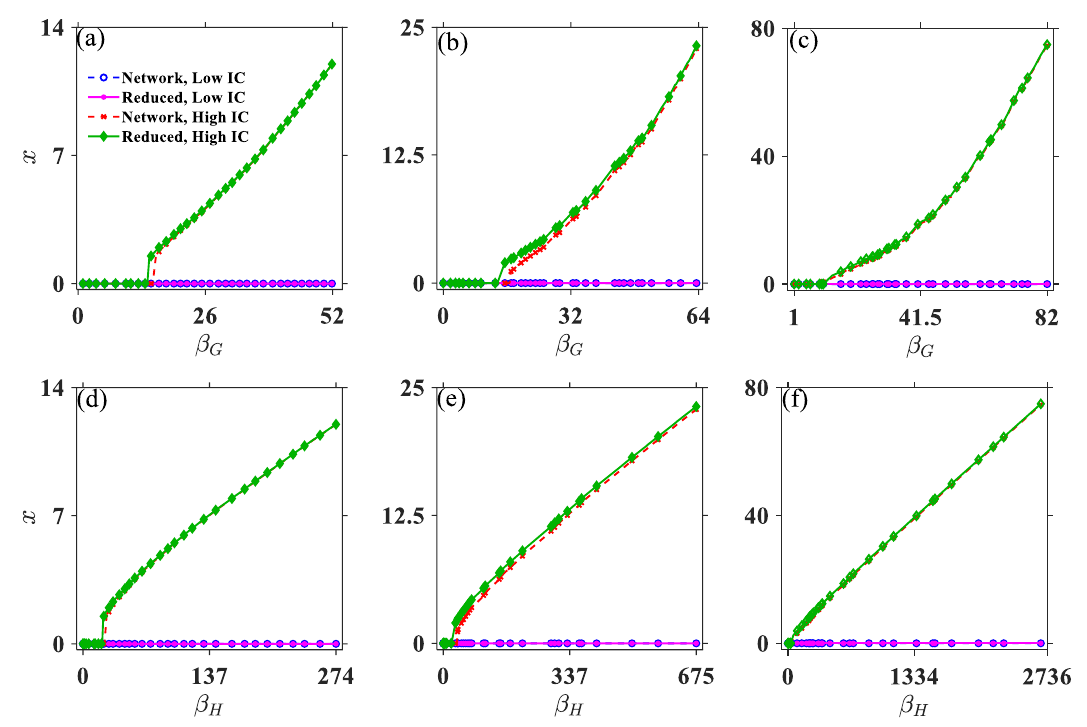}
    \captionof{figure}{
    Gene-regulatory dynamics under progressive network perturbation.
    Each point represents one stage of the node-removal process after
    retaining the largest connected component and recomputing the
    effective structural coefficients. Panels (a)--(c) show the
    full-network effective steady state and reduced steady state as
    functions of the effective pairwise coefficient $\beta_G$,
    whereas panels (d)--(f) show the corresponding states as
    functions of the effective higher-order coefficient $\beta_H$.
    The three columns correspond, from left to right, to the ER
    network, the BA network, and the
    \texttt{bio-SC-LC} real biological network. Blue dashed curves
    with circles and magenta curves denote the full-network and
    reduced solutions obtained from low initial conditions,
    respectively. Red dashed curves and green curves with diamonds
    denote the corresponding high-initial-condition solutions. The
    figure therefore compares the full and reduced dynamical
    responses while simultaneously tracking the structural changes
    produced by progressive node removal.
    }
    \label{fig:gene_perturbation}
\end{minipage}
\vspace{1em}
\subsection{Double-Well System}
\label{subsec:double_well_system}

We next consider the double-well system with coexisting pairwise and
higher-order interactions,
\begin{equation}
\dot{x}_i
=
-(x_i-r_1)(x_i-r_2)(x_i-r_3)
+
D\sum_j A_{ij}x_j
+
\frac{D_{\Delta}}{2}
\sum_{j<l}A_{ijl}x_jx_l,
\qquad i=1,\ldots,N.
\label{eq:double_well_full}
\end{equation}
The corresponding one-dimensional reduced equation is
\begin{equation}
\dot{x}
=
-(x-r_1)(x-r_2)(x-r_3)
+
D\beta_G x
+
\frac{D_{\Delta}}{2}\beta_H x^2.
\label{eq:double_well_reduced}
\end{equation}
Here, the cubic intrinsic term generates the local double-well
dynamics, while the second and third terms represent the pairwise and
triangular interaction contributions, respectively. The coefficients
$\beta_G$ and $\beta_H$ encode the effective pairwise and
higher-order structures of the underlying simplicial complex. We use
\[
r_1=1,
\qquad
r_2=2,
\qquad
r_3=5.
\]
In the absence of coupling, $r_1$ and $r_3$ correspond to the two
stable local equilibria, whereas $r_2$ is the unstable equilibrium
separating their basins of attraction.

The steady states of the reduced system satisfy
\[
-(x-r_1)(x-r_2)(x-r_3)
+
D\beta_Gx
+
\frac{D_{\Delta}}{2}\beta_Hx^2
=0.
\]
Using $r_1=1$, $r_2=2$, and $r_3=5$, this becomes
\[
-(x-1)(x-2)(x-5)
+
D\beta_Gx
+
\frac{D_{\Delta}}{2}\beta_Hx^2
=0.
\]
Since
\[
(x-1)(x-2)(x-5)
=
x^3-8x^2+17x-10,
\]
the steady-state equation can be written as
\[
-x^3
+
\left(
8+\frac{D_{\Delta}}{2}\beta_H
\right)x^2
+
\left(D\beta_G-17\right)x
+10
=0.
\]
Multiplying by $-1$, the fixed points are the real roots of
\begin{equation}
x^3
-
\left(
8+\frac{D_{\Delta}}{2}\beta_H
\right)x^2
+
\left(17-D\beta_G\right)x
-10
=0.
\label{eq:double_well_fixed_points}
\end{equation}

The double-well system provides a direct test of whether the reduction
can preserve bistability and the associated dependence on initial
conditions. The full-network and reduced systems are therefore
integrated from the low and high initial conditions $x_i(0)=0.01$ and
$x_i(0)=5.5$, respectively. These two initial conditions allow the
lower and upper steady-state branches to be followed separately as
the pairwise and higher-order coupling strengths are varied.
% ============================================================
\subsubsection{System-Parameter Sweep}
\label{subsubsec:double_well_parameter_sweep}
% ============================================================

We first examine the double-well dynamics while the underlying
network structure is kept fixed. For both the ER and BA simplicial
complexes, the pairwise coupling strength $D$ is varied while
$D_{\Delta}$ is held fixed. The higher-order coupling strength
$D_{\Delta}$ is subsequently varied while $D$ is held fixed. At every
parameter value, the effective steady state obtained from the full
network is compared with the corresponding prediction of
Eq.~\eqref{eq:double_well_reduced} for both low and high initial
conditions.

Figures~\ref{fig:double_well_system_parameter}(a) and
\ref{fig:double_well_system_parameter}(b) show the steady-state
responses of the ER and BA networks as the pairwise coupling strength
is varied. The reduced equation follows the principal full-network
branches and reproduces the separation between the states reached
from the low and high initial conditions. The corresponding
higher-order coupling sweeps in
Figs.~\ref{fig:double_well_system_parameter}(d) and
\ref{fig:double_well_system_parameter}(e) show that the reduced model
also captures the change in the branch structure generated by
triangular interactions. The comparison therefore tests both the
location of the steady-state branches and their persistence under
changes in the two interaction mechanisms.

To characterize the joint effect of $D$ and $D_{\Delta}$, the reduced
equation is additionally solved over the full
$(D,D_{\Delta})$ parameter plane. The normalized branch separation
$\Delta x_{\mathrm{norm}}$, defined in
Subsection~\ref{subsec:common_protocol}, is shown in
Figs.~\ref{fig:double_well_system_parameter}(c) and
\ref{fig:double_well_system_parameter}(f). Values close to zero
indicate that the low and high initial conditions converge to the
same steady state, whereas larger values indicate a stronger
separation between the two steady-state branches. 

\vspace{1em}
\noindent
\begin{minipage}{\linewidth}
    \centering
    \includegraphics[width=0.88\linewidth]{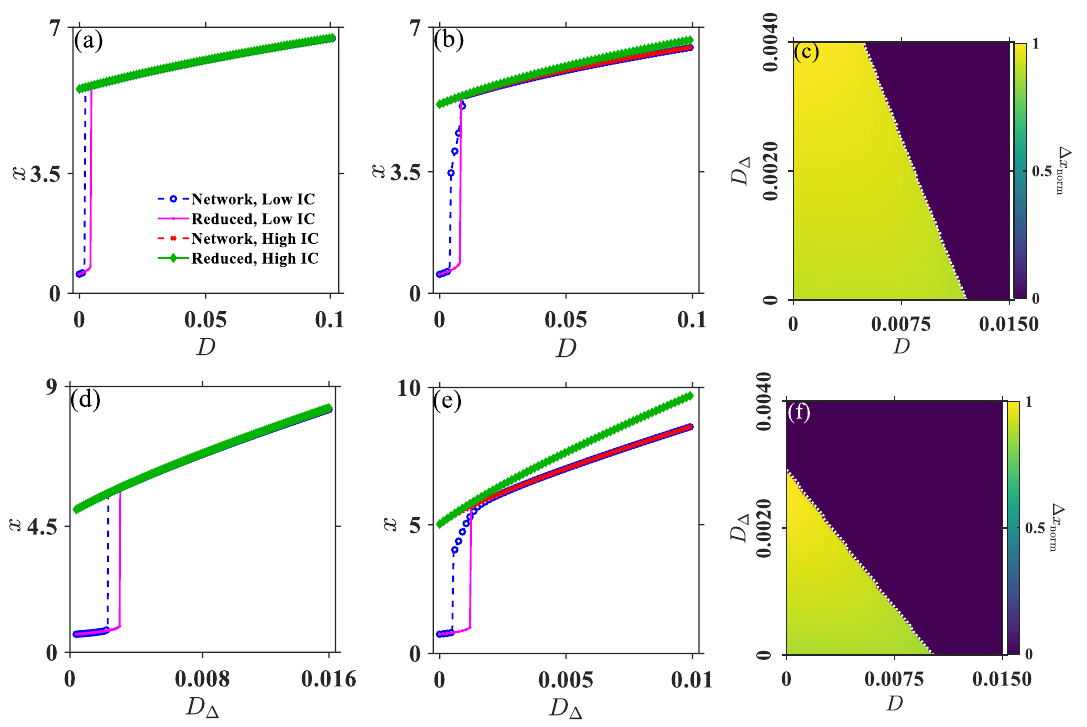}
    \captionof{figure}{
    System-parameter analysis of the double-well dynamics on ER and
    BA simplicial complexes. Panels (a) and (b) show the effective
    steady state $x$ as a function of the pairwise coupling strength
    $D$ for the ER and BA networks, respectively, with
    $D_{\Delta}$ held fixed. Panels (d) and (e) show the corresponding
    sweeps with respect to $D_{\Delta}$ at fixed $D$. Blue dashed
    curves with circles and magenta curves represent the full-network
    and reduced solutions obtained from low initial conditions,
    respectively. Red dashed curves and green curves with diamonds
    represent the corresponding high-initial-condition solutions.
    Panels (c) and (f) show the normalized bistability-strength maps
    obtained from the reduced model over the $(D,D_{\Delta})$
    parameter plane for the ER and BA networks, respectively. Values
    close to zero indicate convergence of the low and high initial
    conditions to the same steady state, whereas larger values
    indicate stronger separation between the two steady-state
    branches. The white dotted contour marks the numerical transition
    between the bistable and monostable regimes.
    }
    \label{fig:double_well_system_parameter}
\end{minipage}
\vspace{1em}

% ============================================================
\subsubsection{Network-Perturbation Test}
\label{subsubsec:double_well_perturbation}
% ============================================================

We next examine the performance of the double-well reduction under
progressive structural perturbation. The node-removal protocol
described in Subsection~\ref{subsec:common_protocol} is applied to
the ER and BA networks and to a real social network. At every
perturbation step, the largest connected component is retained, the
pairwise and triangular degrees are recomputed, and the updated
effective coefficients $\beta_G$ and $\beta_H$ are obtained. The
full dynamical system is then integrated on the perturbed network,
while the reduced equation is evaluated using the corresponding
updated structural coefficients.

For the real-network analysis, we use the
\href{https://networkrepository.com/socfb-Caltech36.php}
{\texttt{socfb-Caltech36}} network from the Facebook100 collection.
This dataset is an undirected and unweighted university friendship
network in which nodes represent individuals and edges represent
Facebook friendship relations. As in the synthetic-network experiments, the largest
connected component is retained during the perturbation process.
The Caltech36 network provides a structurally heterogeneous
real-world test that differs from both the homogeneous ER ensemble
and the preferentially grown BA ensemble.

Figures~\ref{fig:double_well_perturbation}(a)--(c) show the effective
steady states as functions of  
$\beta_G$ for the ER, BA, and Caltech36 networks, respectively.
Figures~\ref{fig:double_well_perturbation}(d)--(f) show the
corresponding responses as functions of $\beta_H$. In each case, the low- and
high-initial-condition branches are tracked throughout the
node-removal sequence. The reduced equation reproduces the principal
full-network response and follows the structural evolution of the two
branches as pairwise links and triangular interaction structures are
progressively removed.

The dependence on $\beta_G$ and $\beta_H$ also clarifies how the
effective interaction structure changes during perturbation. In
particular, the later stages of node removal can produce rapid changes
in the higher-order coefficient because triangular structures are
lost as the network becomes increasingly sparse. The insets in
Figs.~\ref{fig:double_well_perturbation}(d)--(f) enlarge the
small-$\beta_H$ region and make the corresponding late-stage branch
transition more clearly visible.

\vspace{1em}
\noindent
\begin{minipage}{\linewidth}
    \centering
    \includegraphics[width=0.90\linewidth]{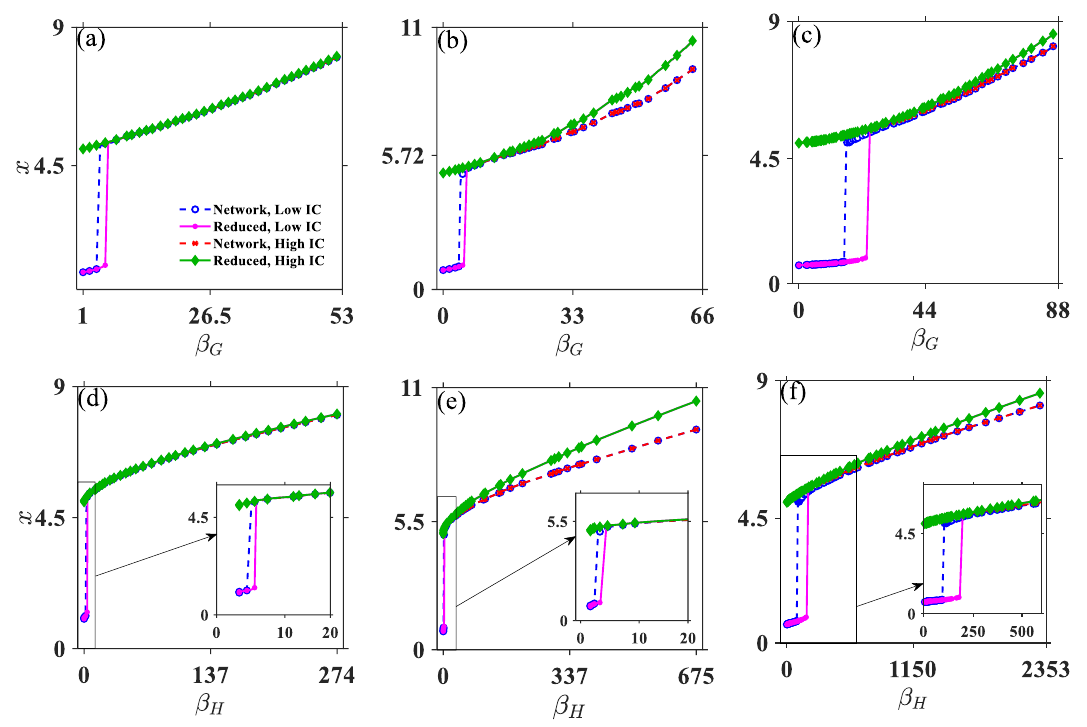}
    \captionof{figure}{
    Double-well dynamics under progressive network perturbation.
    Each point represents one stage of the node-removal process after
    retaining the largest connected component and recomputing the
    effective structural coefficients. Panels (a)--(c) show the
    full-network effective steady state and reduced steady state as
    functions of the effective pairwise coefficient $\beta_G$,
    whereas panels (d)--(f) show the corresponding states as
    functions of the effective higher-order coefficient $\beta_H$.
    The three columns correspond, from left to right, to the ER
    network, the BA network, and the
    \texttt{socfb-Caltech36} real social network. Blue dashed curves
    with circles and magenta curves denote the full-network and
    reduced solutions obtained from low initial conditions,
    respectively. Red dashed curves and green curves with diamonds
    denote the corresponding high-initial-condition solutions. The
    insets in panels (d)--(f) enlarge the small-$\beta_H$ region so
    that the branch transition occurring during the later stages of
    structural perturbation can be resolved more clearly.
    }
    \label{fig:double_well_perturbation}
\end{minipage}
\vspace{1em}
\subsection{SIS Model}
\label{subsec:sis_system}

Finally, we consider the susceptible--infected--susceptible (SIS)
model with coexisting pairwise and higher-order infection channels,
\begin{equation}
\dot{x}_i
=
-\mu x_i
+
\lambda(1-x_i)
\sum_j A_{ij}x_j
+
\frac{\gamma}{2}(1-x_i)
\sum_{j<l}A_{ijl}x_jx_l,
\qquad i=1,\ldots,N.
\label{eq:sis_full}
\end{equation}
The corresponding one-dimensional reduced equation is
\begin{equation}
\dot{x}
=
-\mu x
+
\lambda\beta_G(1-x)x
+
\frac{\gamma}{2}\beta_H(1-x)x^2.
\label{eq:sis_reduced}
\end{equation}
Here, $x_i$ denotes the infected state of node $i$, and $x$ denotes
the corresponding effective network state. The parameter $\mu$ is
the recovery rate, $\lambda$ controls transmission through pairwise
interactions, and $\gamma$ controls transmission through triangular
higher-order interactions. The factor $(1-x_i)$ accounts for the
susceptible fraction available for infection.

The steady states of the reduced system satisfy
\[
-\mu x
+
\lambda\beta_G(1-x)x
+
\frac{\gamma}{2}\beta_H(1-x)x^2
=0.
\]
Factoring out $x$ gives
\[
x\left[
-\mu
+
\lambda\beta_G(1-x)
+
\frac{\gamma}{2}\beta_Hx(1-x)
\right]
=0.
\]
Thus, $x=0$ is always a fixed point. For $x\neq 0$, the nonzero
fixed points satisfy
\[
-\mu
+
\lambda\beta_G(1-x)
+
\frac{\gamma}{2}\beta_Hx(1-x)
=0.
\]
Expanding and rearranging gives
\[
\frac{\gamma}{2}\beta_Hx^2
+
\left(
\lambda\beta_G-\frac{\gamma}{2}\beta_H
\right)x
+
\mu-\lambda\beta_G
=0.
\]
Therefore, for $\gamma\beta_H>0$, the two nonzero fixed-point
branches are
\begin{equation}
x_{\pm}
=
\frac{
\frac{\gamma}{2}\beta_H-\lambda\beta_G
\pm
\sqrt{
\left(
\lambda\beta_G+\frac{\gamma}{2}\beta_H
\right)^2
-
2\gamma\beta_H\mu
}
}{
\gamma\beta_H
}.
\label{eq:sis_nonzero_roots}
\end{equation}
These nonzero fixed points are real when
\[
\left(
\lambda\beta_G+\frac{\gamma}{2}\beta_H
\right)^2
\geq
2\gamma\beta_H\mu.
\]
Hence, the reduced SIS system always possesses the disease-free fixed
point $x=0$. Real nonzero roots can occur when the above discriminant
condition is satisfied, while physically admissible endemic states
must additionally satisfy $0<x\leq 1$.

The SIS model provides a threshold-driven benchmark for the proposed
reduction. The disease-free state competes with endemic states
generated by pairwise and higher-order transmission. In particular,
the nonlinear higher-order infection term may produce parameter
regions in which the long-term state depends on the initial infected
fraction. We therefore integrate both the full-network system and the
reduced equation from low and high initial conditions and compare the
resulting effective steady-state branches. Since the recovery term is
linear, discrepancies between the full and reduced dynamics arise
primarily from the approximations applied to the pairwise and
higher-order infection terms.

% ============================================================
\subsubsection{System-Parameter Sweep}
\label{subsubsec:sis_parameter_sweep}
% ============================================================

We first examine the SIS dynamics while keeping the underlying
simplicial complex fixed. For both the ER and BA networks, the
pairwise infection strength $\lambda$ is varied while the
higher-order infection strength $\gamma$ is held fixed. A
complementary sweep is then performed by varying $\gamma$ at a fixed
value of $\lambda$. At every parameter value, the effective
steady state obtained from the full-network dynamics is compared with
the corresponding prediction of Eq.~\eqref{eq:sis_reduced}. Both low
and high initial conditions are considered so that any dependence on
the initial infected fraction can be identified.

Figures~\ref{fig:sis_system_parameter}(a) and
\ref{fig:sis_system_parameter}(b) show the response to variations in
the pairwise infection strength for the ER and BA networks,
respectively. The reduced equation follows the main full-network
response and reproduces the transition from low infection levels to
the endemic branch. The corresponding higher-order infection sweeps
in Figs.~\ref{fig:sis_system_parameter}(d) and
\ref{fig:sis_system_parameter}(e) demonstrate how triangular
transmission modifies the effective endemic state. The comparison of
the low- and high-initial-condition solutions additionally reveals
the parameter ranges over which the asymptotic state depends on the
initial infection level.

To examine the combined effects of the two transmission channels, the
reduced equation is solved over the full $(\lambda,\gamma)$ parameter
plane. Figures~\ref{fig:sis_system_parameter}(c) and
\ref{fig:sis_system_parameter}(f) show the normalized separation
$\Delta x_{\mathrm{norm}}$ between the steady states reached from the
high and low initial conditions for the ER and BA networks,
respectively. The definition and normalization of this quantity are
given in Subsection~\ref{subsec:common_protocol}. Values close to zero
indicate that the two initial conditions converge to the same
steady state, whereas larger values identify parameter regions with
a stronger separation between the two asymptotic branches.

\vspace{1em}
\noindent
\begin{minipage}{\linewidth}
    \centering
    \includegraphics[width=0.90\linewidth]{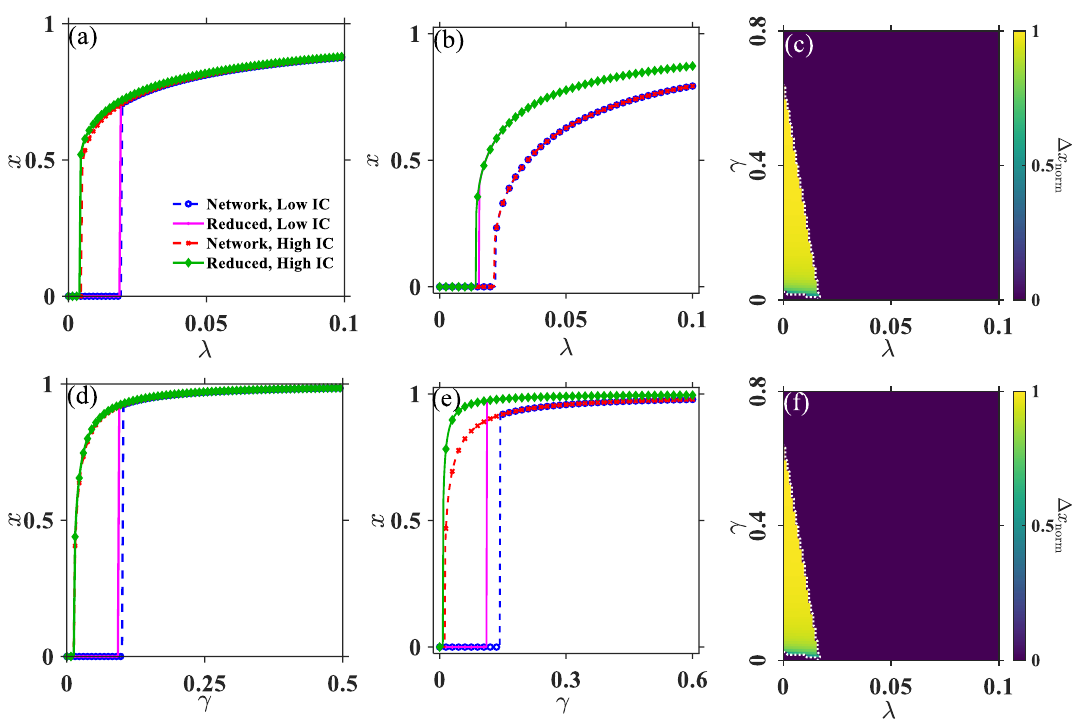}
    \captionof{figure}{
    System-parameter analysis of the SIS dynamics on ER and BA
    simplicial complexes. Panels (a) and (b) show the effective
    infected state $x$ as a function of the pairwise infection
    strength $\lambda$ for the ER and BA networks, respectively, with
    the higher-order infection strength $\gamma$ held fixed. Panels
    (d) and (e) show the corresponding sweeps with respect to
    $\gamma$ at fixed $\lambda$. Blue dashed curves with circles and
    magenta curves represent the full-network and reduced solutions
    obtained from low initial conditions, respectively. Red dashed
    curves and green curves with diamonds represent the corresponding
    solutions obtained from high initial conditions. Panels (c) and
    (f) show the normalized steady-state branch separation over the
    $(\lambda,\gamma)$ parameter plane for the ER and BA networks,
    respectively. Values close to zero indicate convergence of the
    low and high initial conditions to the same steady state, whereas
    larger values indicate stronger separation between the resulting
    branches. The white dotted contour marks the numerical transition
    between the bistable and monostable regimes.
    }
    \label{fig:sis_system_parameter}
\end{minipage}
\vspace{1em}

% ============================================================
\subsubsection{Network-Perturbation Test}
\label{subsubsec:sis_perturbation}
% ============================================================

We next test the SIS reduction under progressive structural
perturbation. The node-removal procedure described in
Subsection~\ref{subsec:common_protocol} is applied to the ER and BA
networks and to a real human-contact network. At each perturbation
step, the largest connected component is retained, the pairwise and
triangular degrees are recomputed, and the updated effective
coefficients $\beta_G$ and $\beta_H$ are obtained. The full SIS
dynamics is evaluated on the perturbed network, while the reduced
equation is solved using the corresponding updated structural
coefficients.

For the real-network analysis, we use the
\href{https://networkrepository.com/ia-infect-dublin.php}
{\texttt{ia-infect-dublin} human-contact network}
available through the Network Data Repository. In this network, nodes
represent individuals and edges represent physical-proximity contacts
recorded during the Infectious SocioPatterns event at the Science
Gallery in Dublin. The contact-based nature of the dataset makes it a
suitable real-world structure for examining SIS-type spreading beyond
the synthetic ER and BA ensembles. The same preprocessing,
largest-connected-component selection, and progressive node-removal
protocol are applied to this network.

Figures~\ref{fig:sis_perturbation}(a)--(c) show the full-network
effective infected state and the reduced prediction as functions of
 $\beta_G$ for the ER, BA, and
\texttt{ia-infect-dublin} networks, respectively.
Figures~\ref{fig:sis_perturbation}(d)--(f) show the corresponding
dependence on  $\beta_H$. These plots
therefore describe both the dynamical response to structural
degradation and the accompanying changes in the effective pairwise
and triangular interaction strengths.

Across the perturbation sequence, comparison of the full and reduced
curves determines whether the two effective structural coefficients
continue to provide an adequate low-dimensional description as nodes
and interaction structures are progressively removed. The low- and
high-initial-condition branches also indicate whether the reduction
preserves any dependence of the endemic state on the initial infected
fraction during network degradation.

\vspace{1em}
\noindent
\begin{minipage}{\linewidth}
    \centering
    \includegraphics[width=0.90\linewidth]{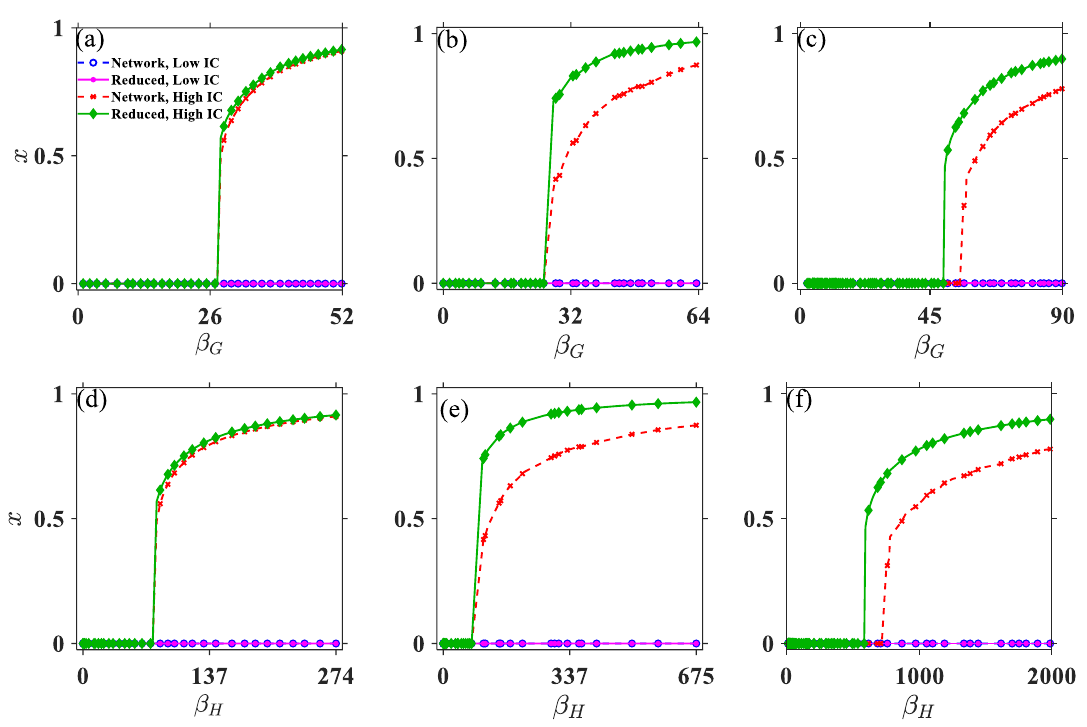}
    \captionof{figure}{
    SIS dynamics under progressive network perturbation. Each point
    represents one stage of the node-removal process after retaining
    the largest connected component and recomputing the effective
    structural coefficients. Panels (a)--(c) show the full-network
    effective infected state and reduced steady state as functions of
    the effective pairwise coefficient $\beta_G$, whereas panels
    (d)--(f) show the corresponding states as functions of the
    effective higher-order coefficient $\beta_H$. The three columns
    correspond, from left to right, to the ER network, the BA network,
    and the \texttt{ia-infect-dublin} human-contact network. Blue
    dashed curves with circles and magenta curves denote the
    full-network and reduced solutions obtained from low initial
    conditions, respectively. Red dashed curves and green curves with
    diamonds denote the corresponding high-initial-condition
    solutions. The figure thus compares the full and reduced SIS
    responses while tracking the structural changes generated by
    progressive node removal.
    }
    \label{fig:sis_perturbation}
\end{minipage}
\vspace{1em}
\section{Discussion}
\label{sec:discussion}

We developed a one-dimensional effective-state reduction for nonlinear
dynamics on simplicial complexes with simultaneous pairwise and
triangular interactions. Unlike reductions in which a higher-order
term is appended to a pairwise formulation, the collective variable
is defined through the mixed structural weight
$w_i=k_i+k_i^{\Delta}$. The reduced one dimensional system consist of two structure dependent parameter $\beta_G$ (dominated by the pairwise degree) and $\beta_H$ (dominated by the higher-order degree).
An SSM-based method has recently been proposed to construct a nonlinear invariant manifold from dominant spectral modes and reconstruct node-level dynamics through a lifting map \cite{Bhaskaran2026SSM}. 
The pairwise coefficient depends on the
pairwise-degree second moment and the cross-moment between pairwise
and triangular degrees, whereas the higher-order coefficient depends
on the triangular-degree second moment and the same cross-moment.
Thus, the reduction retains information about how the two interaction
orders are distributed over the same set of nodes.

The fluctuation analysis clarifies the approximation mechanism.
Because the effective state is defined using the mixed weight, the
first-order weighted fluctuation vanishes exactly, eliminating the
linear correction from the weighted intrinsic term. The remaining
corrections involve correlations between state deviations, local
neighbourhoods, and structural degrees. Structural heterogeneity
therefore becomes dynamically important when it generates broad or
structure-dependent node-state distributions, rather than merely
because the degree distribution is wide. Simplicial complexes with
similar $\beta_G$ and $\beta_H$ may consequently display different
reduction errors if their state fluctuations are organised
differently.

The three dynamical systems highlight different aspects of this
mechanism. In the SIS model, the linear recovery term is reduced
exactly, so the approximation error originates from the pairwise and
higher-order infection closures. These closures remain highly
accurate for the ER network, while the BA network exhibits larger
systematic deviations, particularly in the triangular contribution.
Nevertheless, the reduced equation preserves the principal epidemic
branch structure and provides a close approximation to the effective
steady state.

For the gene-regulatory system, the intrinsic closure is also exact
for $f=1$, and the main approximations arise from the nonlinear Hill
activation terms. The ER results show close agreement except near
states close to zero, where a small absolute discrepancy can produce
a comparatively large relative error. In the BA network, the
pairwise activation closure deviates more visibly than the triangular
closure, while the overall reduction becomes more accurate when the
higher-order contribution dominates the steady-state balance. This
shows that reduction accuracy depends not only on total interaction
strength but also on the relative dynamical importance of the two
interaction channels.

The double-well system provides a stronger test because its intrinsic
dynamics are cubic. The corresponding closure contains corrections
associated with the weighted variance and third central moment of the
node states. These corrections remain small for the ER network and
become more visible for the structurally heterogeneous BA network,
particularly on the high-state branch. The pairwise and higher-order
closures may also develop biases in opposite directions, producing
partial error compensation. Therefore, the term-wise diagnostics
provide useful information beyond the final effective-state error and
help identify how different approximation contributions combine.

The parameter sweeps and two-parameter maps show that the reduced
equation reproduces the principal steady-state branches and the
transition regions associated with activation, epidemic thresholds,
and bistability. The largest discrepancies occur near branch
transitions, where small changes in coupling strengths or structural
coefficients can shift the system between different stable states.
Error peaks in these regions therefore reflect both closure accuracy
and the inherent sensitivity of the local bifurcation structure.

The node-removal experiments provide a complementary structural test.
At each step, the largest connected component is retained and both
$\beta_G$ and $\beta_H$ are recalculated. Because edges and triangles
may be destroyed at different rates, retaining the two coefficients
separately allows the reduced equation to follow the changing balance
between pairwise and higher-order interactions. The agreement
observed over much of the perturbation sequence demonstrates that the
effective-state description remains informative across a broad range
of progressively changing network structures.

Within the present study, resilience is interpreted through the
persistence, displacement, and loss of effective steady-state
branches under dynamical and structural perturbations. The reduction
therefore provides a compact dynamical representation for tracking
branch survival, transition locations, and the changing roles of
pairwise and triangular interactions. This interpretation complements,
rather than replaces, alternative scalar measures of network
resilience.

The comparison between ER and BA networks further identifies the
conditions under which the reduction is most effective. ER networks
have relatively narrow structural distributions and more homogeneous
node states, leading to strong agreement at both the closure and
steady-state levels. BA networks offer a more demanding test because
of their broader pairwise and triangular participation patterns.
Although systematic closure deviations become more visible, the
scalar model continues to reproduce the principal branch structure.
This indicates that the framework remains useful even beyond the
most homogeneous structural regime.

The present formulation also provides several natural opportunities
for refinement. The additive weight $w_i=k_i+k_i^{\Delta}$ is
transparent, analytically convenient, and directly incorporates both
pairwise and triangular participation into the collective coordinate.
Future work may investigate data-driven or system-specific procedures
for selecting effective coordinates that further improve closure
accuracy in strongly heterogeneous networks. Similarly, systems
containing clearly separated communities, degree classes, or
localised modes may benefit from a two- or multi-dimensional
effective-state description.

The framework can also be extended beyond triangular interactions,
static structures, and steady states. Simplicial complexes containing
multiple interaction orders could be represented through one
effective coefficient for each relevant simplex order. Transient
responses, temporal simplicial complexes, and adaptive interactions
would further motivate time-dependent structural coefficients and
closure relations. These extensions provide a direct route for
applying the present construction to a broader class of higher-order
dynamical systems.

Finally, the neglected fluctuation terms suggest a systematic route
toward analytical error estimates based on weighted variances,
covariances, and neighbourhood correlations. Future perturbation
studies may also compare random node removal with targeted removal
based on pairwise degree, triangular degree, or mixed centrality.
Overall, the results demonstrate that mixed pairwise and higher-order
dynamics can be represented by a scalar equation without combining
both interaction orders into a single structural quantity. The
reduction is especially accurate when the structurally weighted node
states remain concentrated around the effective state, while the
separate closure diagnostics provide a practical way to identify and
address the sources of discrepancy when greater heterogeneity is
present.\\
\noindent\textbf{Declaration of AI use.} 
During the preparation of this work, the authors used language refinement tools to refine the English style of the presentation, alongside grammatical corrections. After using this tool/service, the authors reviewed and edited the content as needed and take full responsibility for the content of the publication.

\noindent\textbf{Acknowledgement:} PK acknowledge BSES Rajdhani Power Limited and BSES Yamuna Power Limited for the CSR grants to carry out the work at the Smart Energy Learning Centre (SELC), Dhirubhai Ambani University (DAU), Gandhinagar, Gujarat, India (Grant Number: CSR-25/BSES/A7-PRK/SELC). CH acknowledges support from ARNF India (Grant Number ANRF/ECRG/2024/000207/PMS).
\section{Appendix}
\subsection*{Appendix A. Evolution of Effective Structural Coefficients under Node Removal}
\label{Appendix A}
To further characterize the structural changes induced by node removal, we track $\beta_G$ and $\beta_H$, throughout the perturbation process. Both coefficients generally decrease as nodes are progressively removed, although their rates of decay vary across networks. In particular, $\beta_H$ often shows a stronger relative decline, indicating a stronger disruption of higher-order interaction structure under progressive network degradation.
\vspace{1em} % Adds a little space above
\noindent
\newpage
\begin{minipage}{\linewidth}
    \centering

    \includegraphics[width=.90\linewidth]{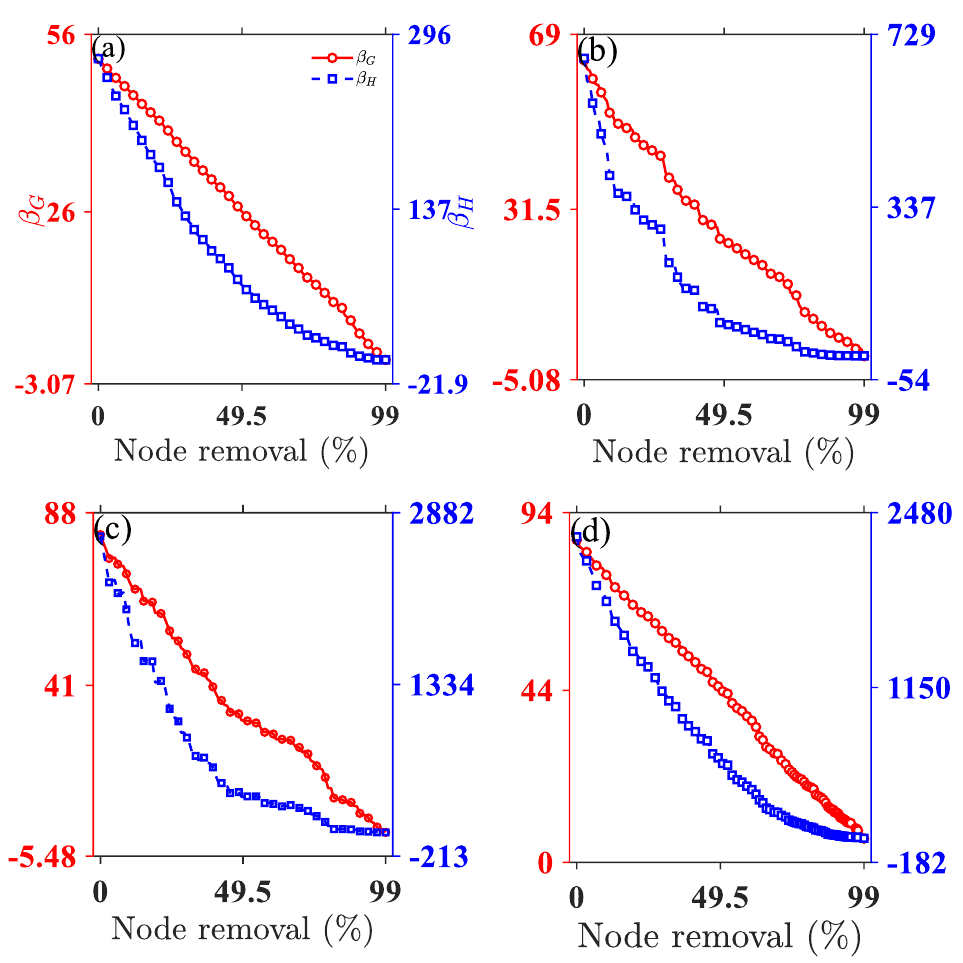}
   \captionof{figure}{Evolution of the effective pairwise coefficient $\beta_G$ (red circles, left axis) and the effective higher-order coefficient $\beta_H$ (blue squares, right axis) under progressive node removal. Panels (a) and (b) correspond to the ER and BA synthetic networks, respectively, while panels (c) and (d) correspond to the real networks bio-SC-LC and ia-infect-dublin, respectively.}
    \label{fig:placeholder}
\end{minipage}

\subsection*{Appendix B. Approximation-Diagnostic Validation}
\label{Appendix B}

This appendix examines the individual approximations entering the
one-dimensional reduction. For each dynamical system, the full-network
quantities associated with the intrinsic, pairwise, and higher-order
terms are compared with their corresponding reduced expressions.
Agreement with the diagonal indicates that the relevant approximation
is accurate. The final panel in each figure reports the relative error
between the full-network effective steady state and the reduced steady
state as the effective higher-order structural coefficient $\beta_H$
changes.

% ============================================================
\subsubsection*{B.1. Gene-Regulatory System}
% ============================================================

\vspace{1em}
\noindent
\begin{minipage}{\linewidth}
    \centering
    \includegraphics[width=0.80\linewidth]
    {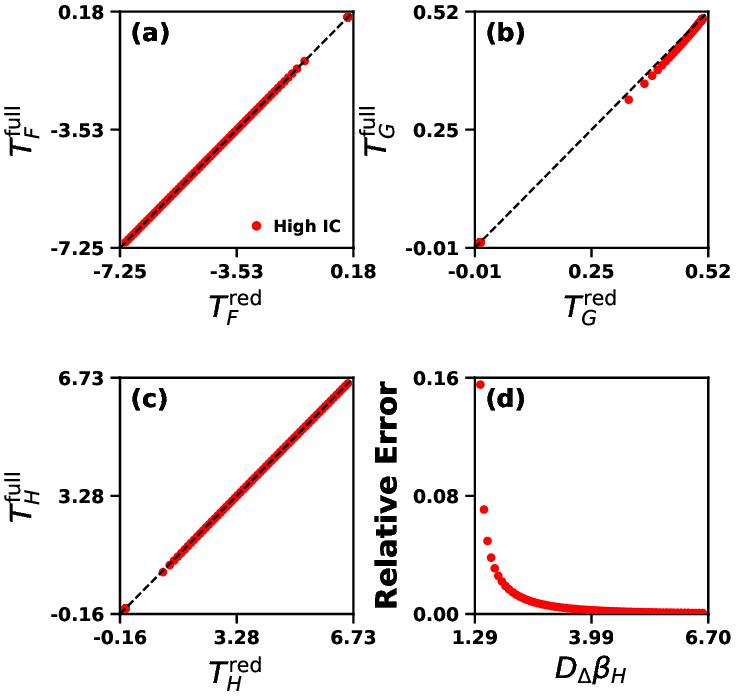}

    \captionof{figure}{
    Approximation validation for the ER gene-regulatory network on the
    high-initial-condition steady-state branch. Panels (a)--(c)
    compare the full-network quantities entering the intrinsic,
    pairwise, and higher-order terms with their corresponding reduced
    approximations. The intrinsic term remains close to the diagonal,
    indicating that the degradation contribution is accurately
    represented. The pairwise closure also follows the diagonal,
    although a small systematic deviation appears at larger values.
    The higher-order closure remains close to the diagonal throughout
    most of the tested range. Panel (d) shows the relative error
    between the full-network effective steady state and the reduced
    steady state as a function of the effective higher-order
    structural coefficient $D_\Delta\beta_H$. The error is small over most of
    the range, while the isolated large value occurs when the
    full-network effective steady state is close to zero, making the
    denominator in the relative-error measure very small.
    }
    \label{fig:gene_er_approximation}
\end{minipage}

\newpage

\vspace{1em}
\noindent
\begin{minipage}{\linewidth}
    \centering
    \includegraphics[width=0.80\linewidth]
    {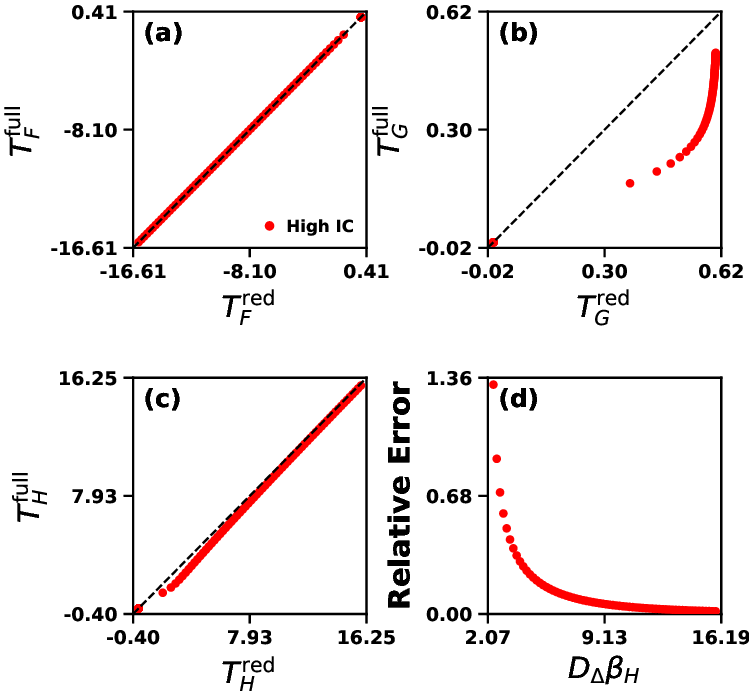}

    \captionof{figure}{
    Approximation validation for the BA gene-regulatory network with
    $m=16$ on the high-initial-condition steady-state branch. Panels
    (a)--(c) compare the full-network quantities entering the
    intrinsic, pairwise, and higher-order terms with their
    corresponding reduced approximations. The intrinsic approximation
    remains close to the diagonal because the degradation term is
    linear for the present choice $f=1$. The pairwise closure shows a
    systematic deviation below the diagonal, indicating that the
    reduced pairwise Hill-activation term is larger than the
    corresponding full-network contribution over part of the tested
    range. In contrast, the higher-order closure remains close to the
    diagonal for most parameter values. Panel (d) shows the relative
    error between the full-network effective steady state and the
    reduced steady state as a function of $D_\Delta\beta_H$. The error is
    larger in the weak higher-order regime and decreases as the
    higher-order structural contribution becomes stronger.
    }
    \label{fig:gene_ba_approximation}
\end{minipage}

% ============================================================
\subsubsection*{B.2. Double-Well System}
% ============================================================

\vspace{1em}
\noindent
\begin{minipage}{\linewidth}
    \centering
    \includegraphics[width=0.80\linewidth]
    {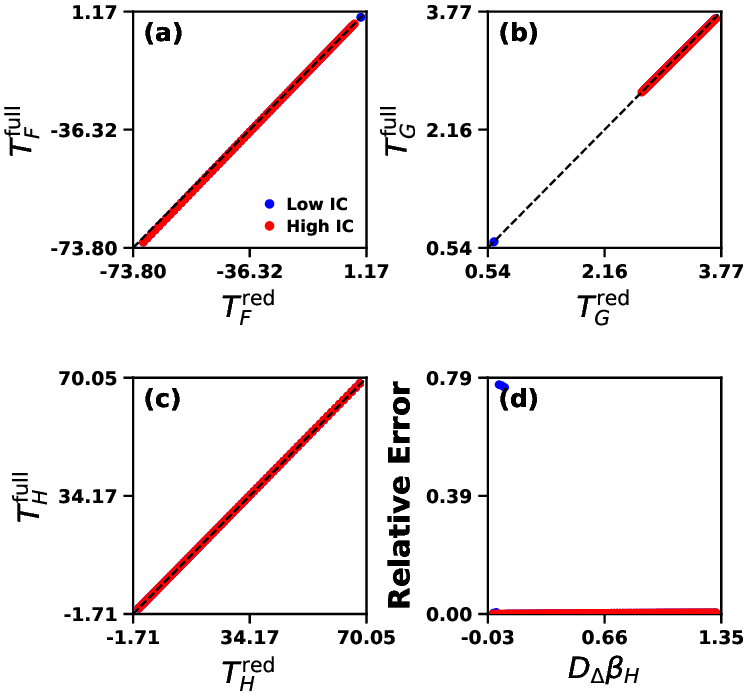}

    \captionof{figure}{
    Approximation validation for the ER double-well network under low
    and high initial conditions. Blue markers correspond to the
    low-initial-condition branch with $x_i(0)=0.01$, while red markers
    correspond to the high-initial-condition branch with
    $x_i(0)=5.5$. Panels (a)--(c) compare the full-network quantities
    entering the intrinsic, pairwise, and higher-order terms with
    their corresponding reduced approximations. The points remain
    close to the diagonal in all three panels, indicating that the
    intrinsic, pairwise, and higher-order contributions are accurately
    represented by the reduced description. The agreement is
    particularly strong for the higher-order closure in panel (c).
    Panel (d) shows the relative error between the full-network
    effective steady state and the reduced steady state as a function
    of the effective higher-order structural coefficient $D_\Delta\beta_H$.
    The relative error remains small over the tested range for both
    initial-condition branches, confirming the accuracy of the
    reduction for the ER double-well network.
    }
    \label{fig:dw_er_approximation}
\end{minipage}

\vspace{1em}

\noindent
\begin{minipage}{\linewidth}
    \centering
    \includegraphics[width=0.80\linewidth]
    {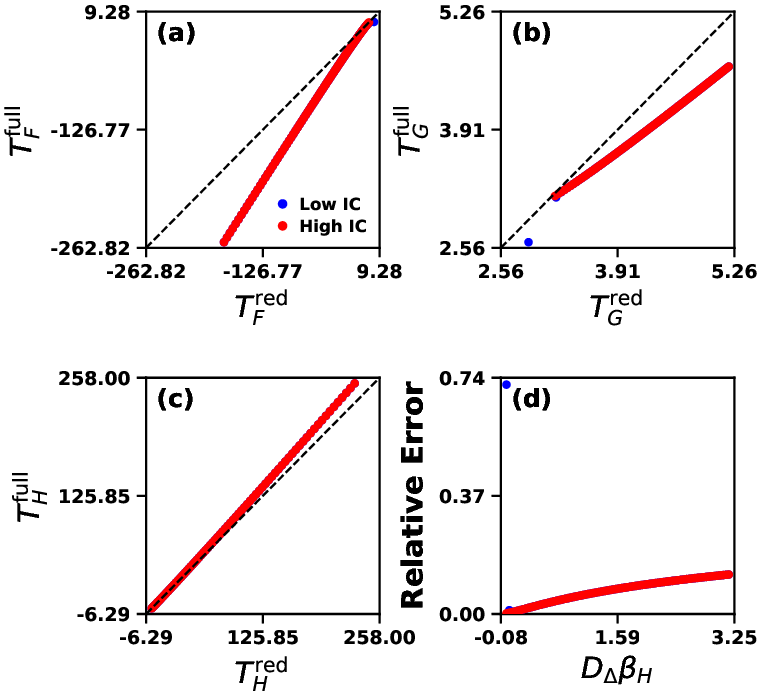}

    \captionof{figure}{
    Approximation validation for the BA double-well network with
    $m=16$ under low and high initial conditions. Blue markers
    correspond to the low-initial-condition branch with
    $x_i(0)=0.01$, while red markers correspond to the
    high-initial-condition branch with $x_i(0)=5.5$. Panels (a)--(c)
    compare the full-network quantities entering the intrinsic,
    pairwise, and higher-order terms with their corresponding reduced
    approximations. Relative to the ER case, stronger deviations from
    the diagonal are visible because of the greater structural
    heterogeneity of the BA network. The intrinsic approximation
    shows a noticeable deviation on the high-initial-condition branch,
    reflecting the nonlinear double-well term under heterogeneous node
    states. The pairwise closure lies mainly below the diagonal,
    whereas the higher-order closure lies above the diagonal at larger
    values. These trends indicate that the reduced pairwise term tends
    to exceed the corresponding full-network contribution, while the
    reduced higher-order term tends to underestimate it in part of the
    tested range. Panel (d) shows the relative error between the
    full-network effective steady state and the reduced steady state
    as a function of $D_\Delta\beta_H$. The error is larger than in the ER
    case, particularly on portions of the high-initial-condition
    branch, reflecting the stronger effect of structural and state
    heterogeneity in the BA network.
    }
    \label{fig:dw_ba_approximation}
\end{minipage}

% ============================================================
\subsubsection*{B.3. SIS Model}
% ============================================================

\vspace{1em}
\noindent
\begin{minipage}{\linewidth}
    \centering
    \includegraphics[width=0.80\linewidth]
    {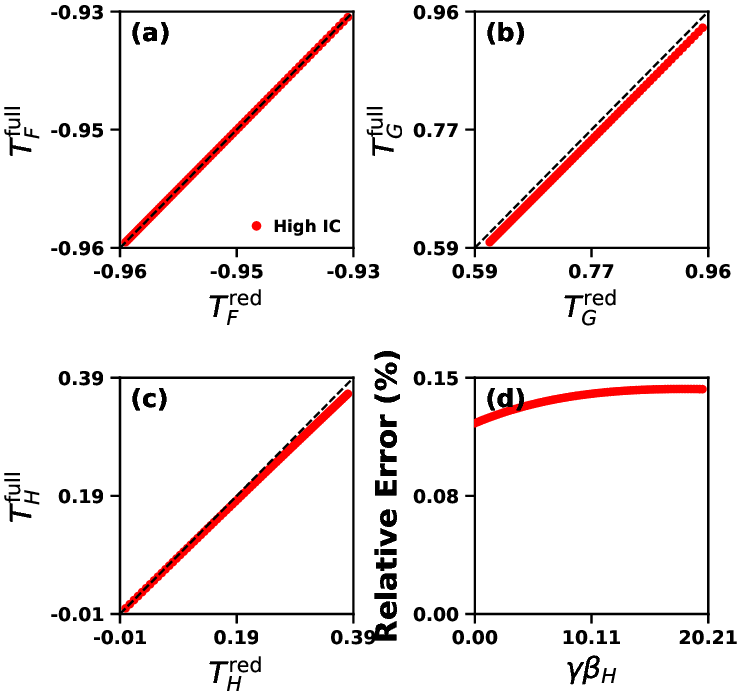}

    \captionof{figure}{
    Approximation validation for the ER SIS network on the
    high-initial-condition steady-state branch. Panels (a)--(c)
    compare the full-network quantities entering the intrinsic,
    pairwise, and higher-order terms with their corresponding reduced
    approximations. The intrinsic approximation is nearly exact,
    consistent with the linear recovery term. The pairwise and
    higher-order closures also remain close to the diagonal, with only
    small systematic deviations over the tested range. Panel (d)
    shows the relative error between the full-network effective steady
    state and the reduced steady state as a function of the effective
    higher-order structural coefficient  $\gamma\beta_H$. The small relative
    error confirms that the reduced description accurately reproduces
    the effective SIS steady state for the ER network.
    }
    \label{fig:sis_er_approximation}
\end{minipage}

\newpage

\vspace{1em}
\noindent
\begin{minipage}{\linewidth}
    \centering
    \includegraphics[width=0.75\linewidth]
    {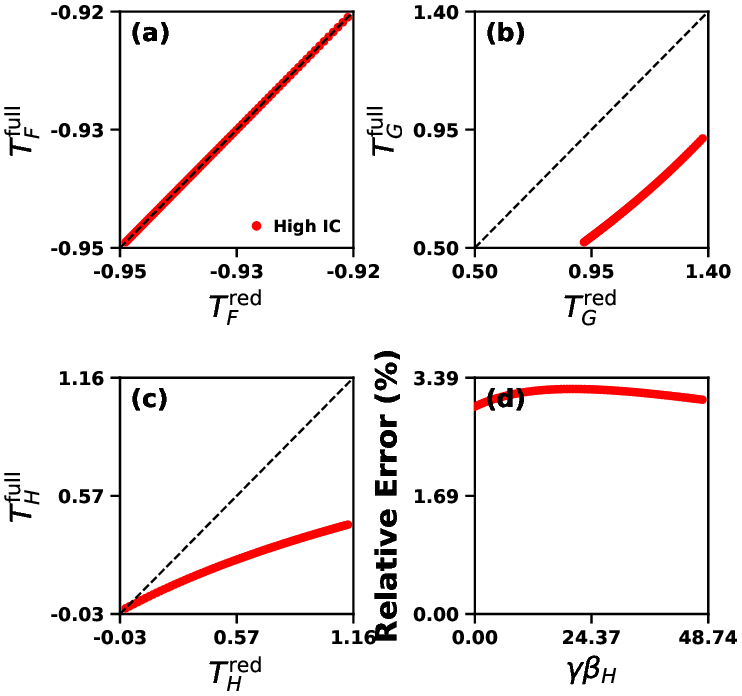}

    \captionof{figure}{
    Approximation validation for the BA SIS network with $m=16$ on
    the high-initial-condition steady-state branch. Panels (a)--(c)
    compare the full-network quantities entering the intrinsic,
    pairwise, and higher-order terms with their corresponding reduced
    approximations. The intrinsic approximation remains nearly exact
    because the SIS recovery term is linear. The pairwise and
    higher-order closures show systematic deviations below the
    diagonal, indicating that the corresponding reduced interaction
    terms exceed the full-network contributions over part of the
    tested range. The deviation is more pronounced for the
    higher-order closure, consistent with the stronger degree and
    triangular-structure heterogeneity of the BA network. Panel (d)
    shows the relative error between the full-network effective steady
    state and the reduced steady state as a function of $\gamma\beta_H$.
    Although the error is larger than for the ER network, the reduced
    model continues to reproduce the principal effective steady-state
    behavior of the BA SIS system.
    }
    \label{fig:sis_ba_approximation}
\end{minipage}

\bibliographystyle{unsrtnat}
\bibliography{references}
\end{document}